\newtheorem{theorem}{Theorem}
\newcommand{\bfX}{\boldsymbol{X}}
\newcommand{\bfW}{\boldsymbol{W}}
\newcommand{\bfA}{\boldsymbol{A}}
\newcommand{\ci}{\perp\!\!\!\perp}
\newcommand{\bfC}{\boldsymbol{C}}
\newcommand{\bfM}{\boldsymbol{M}}
\newcommand{\bfZ}{\boldsymbol{Z}}
\newcommand{\bfU}{\boldsymbol{U}}
\newcommand{\bfI}{\boldsymbol{I}}
\newcommand{\bfB}{\boldsymbol{B}}
\newcommand{\bfS}{\boldsymbol{S}}
\newcommand{\bfV}{\boldsymbol{V}}
\newcommand{\bfs}{\boldsymbol{s}}
\newcommand{\bfv}{\boldsymbol{v}}
\newcommand{\bfTheta}{\boldsymbol{\Theta}}
\newcommand{\bfGamma}{\boldsymbol{\Gamma}}
\newcommand{\bfomega}{\boldsymbol{\omega}}
\newcommand{\bfBeta}{\boldsymbol{\beta}}
\newcommand{\bfSigma}{\boldsymbol{\Sigma}}
\newcommand{\bfzero}{\boldsymbol{0}}
\newcommand{\beginsupplement}{%
        \setcounter{table}{0}
        \renewcommand{\thetable}{S\arabic{table}}%
        \setcounter{figure}{0}
        \renewcommand{\thefigure}{S\arabic{figure}}%
     }
\title{Towards causality-aware predictions in static anticausal machine learning tasks: the linear structural causal model case}
\author{%
  Elias Chaibub Neto \\
  Sage Bionetworks, Seattle, WA 98121 \\
  \texttt{elias.chaibub.neto@sagebase.org}
}
\begin{document}

\maketitle
\begin{abstract}
We propose a counterfactual approach to train ``causality-aware" predictive models that are able to leverage causal information in static anticausal machine learning tasks (i.e., prediction tasks where the outcome influences the features). In applications plagued by confounding, the approach can be used to generate predictions that are free from the influence of observed confounders. In applications involving observed mediators, the approach can be used to generate predictions that only capture the direct or the indirect causal influences. Mechanistically, we train supervised learners on (counterfactually) simulated features which retain only the associations generated by the causal relations of interest. We focus on linear models, where analytical results connecting covariances, causal effects, and prediction mean squared errors are readily available. Quite importantly, we show that our approach does not require knowledge of the full causal graph. It suffices to know which variables represent potential confounders and/or mediators. We discuss the stability of the method with respect to dataset shifts generated by selection biases and validate the approach using synthetic data experiments.
\end{abstract}

\section{Introduction}

Causal modeling has been recognized as a potential solution to many challenging problems in machine learning (ML)~\cite{pearl2019}. Current approaches operating at the intersection between causality and ML can be roughly split into three different classes. The first, focus on the prediction of the consequences of different actions, policies, and interventions, aiming to improve decision making. These approaches attempt to answer ``what if" counterfactual questions such as ``What if I had treated a patient differently?". The second class focus on the generation of invariant/stable predictions aiming to improve model generalization under dataset shifts~\cite{quionero2009}, while the third class is largely concerned with the estimation of causal effects and only uses ML techniques as a tool to improve the estimation of causal effects. (These approaches will be reviewed in more detail in the Related work section.)

In this paper, our goal is to generate causality-inspired predictions that only leverage associations generated by the causal mechanisms that we are interested in modeling. To this end, we propose a simple counterfactual approach to train ``causality-aware" predictive models, where we train and evaluate ML algorithms on (counterfactually) simulated features which retain only the associations of interest. For instance, in anticausal prediction tasks influenced by mediators and/or confounders where we are interested in the direct effects of the outcome on the features, we simulate counterfactual features containing only the associations generated by the direct causal effects. This ability to generate learners that only leverage associations generated by the causal relations of interest is important in practice. For instance, in situations where confounding is unstable across the training and target populations (while direct causal effects are stable), the approach can be used to generate more stable predictions. Furthermore, in situations where the confounders and/or mediators represent sensitive variables, the approach can also be used to generate predictions that are free from the direct influence of the sensitive variables\footnote{The approach can also be used to generate predictions that are exclusively driven by associations generated by sensitive variables. Such models could be used, for example, to demonstrate how the sensitive variables can still impact the predictive performance of a learner, even when they are not included as inputs in the model.}. (In this paper, however, we present synthetic data illustrations focusing on stable prediction applications, rather than on the analysis of sensitive variables.)

We focus on linear models, where analytical results connecting covariances, causal effects, and prediction mean squared error (MSE) are readily available. At first sight, the proposed approach appears to require the strong assumption that one needs to know the full causal graph describing the data generation process. We point out, however, that this is not the case. The approach only requires partial domain knowledge about which variables represent potential confounders and/or mediators. Noteworthy, we will describe how we can always reparameterize the model in a way that the covariance generated by the causal relations among the features is pushed towards the feature error terms (and similarly for the covariances among the mediators and the covariances among the confounders) so that we can safely generate counterfactual data without even knowing how these variables are causally related. In practice, this is an important advantage in applications involving high-dimensional feature spaces and metadata, where it is unlikely that domain knowledge about these causal relationships will be available.

We also investigate the stability of the proposed approach with respect to (w.r.t.) dataset shifts~\cite{quionero2009}. A standard assumption in supervised ML is that the training and test sets are independent and identically distributed. In practice, however, this assumption is often violated, and dataset shifts are commonly observed in the real world. At the same time, ML models are often capable of leveraging subtle statistical associations between the input ($\bfX$) and outcome ($Y$) variables in the training data, including spurious associations generated by confounders ($\bfC$) and other sources of biases in the data. As a consequence, predictions from confounded learners are often unstable across shifted test sets, and can fail to generalize.

We focus on dataset shifts generated by selection biases~\cite{heckman1979,hernan2004,bareinboim2012} affecting the joint distribution of the confounders and outcome variable, $P(\bfC, Y)$. In real word applications, selection biases often lead to the collection of non-representative training sets and represent an important challenge for ML. While simple approaches such as matching and inverse probability weighting can be used to neutralize these issues in situations where the joint distribution of $\bfC$ and $Y$ in the target population is known, here we focus on the case where the test set can be shifted in unknown ways w.r.t. $P(\bfC, Y)$. This more challenging setting requires more sophisticated adjustment methods, which are sometimes applied to the training data alone with the hope that training an unconfounded model will be enough to generate stable predictions in shifted test sets. Here, we show that this is insufficient, and that deconfounding both the training and test set features can produce more stable predictions.

\vspace{-0.2cm}
\section{Related work}
\vspace{-0.2cm}
Causal approaches based on counterfactual thinking have been used in the context of ML applications to predict the outcomes of different actions, policies, and interventions using non-experimental data~\cite{bottou2013,swaminathan2015,johansson2016,schulam2017}. The goal is to make ``what if" predictions of the consequences of different actions in order to guide decisions. These approaches, however, are only applicable in situations where the ``treatment" variables correspond to features of the ML model, so that prediction goes in the same direction of the causal effect (i.e., the features influence the response variable). Our approach, on the other hand, focus on static anticausal ML tasks where the response influences the features.

Our work is similar in spirit to invariant prediction approaches~\cite{peters2016,ghassami2017,heinze2018,rojascarulla2018,magliacane2018,irm2019} or stable prediction approaches~\cite{kuang2018,subbaswamy2018,subbaswamy2019,kuang2020} in the sense that it can also be used to generate predictions based on the stable properties of the data, without absorbing unstable spurious associations. Invariant prediction approaches, however, rely on multiple training sets to learn invariances while the causality-aware (and stable prediction) approaches only requires a single training set. Some stable prediction approaches require, nonetheless, full knowledge of the causal graph~\cite{subbaswamy2018}, or can only be directly used in causal prediction tasks~\cite{kuang2018,kuang2020}, while the causality-aware method only requires partial knowledge of the causal graph, and is suited to anticausal tasks. (Supplementary Section 1 provides more detailed discussions on these more closely related approaches.)

Supervised ML has also been extensively used to aid the estimation of causal effects, where it can potentially attenuate model mispecification issues~\cite{kreif2019}. In particular, supervised ML has been used to: (i) improve the calculation of propensity scores~\cite{mccaffrey2004,westreich2010,lee2010,wyss2014,pirracchio2015,zhu2015}; (ii) fit regression approaches to estimate outcome models~\cite{hill2011,austin2012,hahn2017}; and (iii) also for the development of double-robust approaches that combine propensity score and outcome regression approaches together~\cite{gruber2010,chernozhukov}. In this paper, however, we take an opposite strategy where instead of using ML to improve causal inference we leverage (partial) causal knowledge to improve the explainability and robustness of ML predictions.

\section{Preliminaries}
Throughout the text we let $\bfX = (X_1, X_2, \ldots, X_{n_X})^T$, $\bfC = (C_1, C_2, \ldots, C_{n_C})^T$, and $\bfM = (M_1, M_2, \ldots, M_{n_M})^T$ represent, respectively, sets of features, confounders, and mediators,
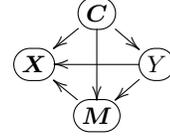
\begin{wrapfigure}{r}{0.15\textwidth}
\vskip -0.2in
$$
{\footnotesize
\xymatrix@-1.4pc{
& *+[F-:<10pt>]{\bfC} \ar[dl] \ar[dr] \ar[dd] & \\
*+[F-:<10pt>]{\bfX} & & *+[F-:<10pt>]{Y} \ar[ll] \ar[dl] \\
& *+[F-:<10pt>]{\bfM} \ar[ul] & \\
}}
$$
\vskip -0.1in
  \caption{Anticausal prediction task.}
  \label{fig:causal.anticausal}
\end{wrapfigure}
while $Y$ represents the response (outcome) variable. The causality-aware counterfactual versions of $\bfX$ and $\bfM$ are represented, respectively, by $\bfX^\ast$ and $\bfM^\ast$. Following~\cite{pearl2009,spirtes2000}, we adopt a mechanism-based approach to causation, where the statistical information encoded in the joint probability distribution of a set of variables is supplemented by a \textit{directed acyclic graph} (DAG) describing our qualitative assumptions about the causal relation between the variables. Following~\cite{scholkopf2012} we denote prediction tasks where the response influences the features as \textit{anticausal prediction tasks}, whereas tasks where the features influence the response are denoted as \textit{causal prediction tasks}. Figure \ref{fig:causal.anticausal} presents the DAG of a general anticausal predictive task, where $\bfX$, $\bfC$, and $\bfM$ are organized into arbitrary DAG subdiagrams (see Supplementary Figure S6 for an illustrative example).

\section{The proposed approach}

\subsection{The univariate case}

For the sake of clarity, we first describe our approach in the special case where $\bfX$, $\bfC$, and $\bfM$ are composed of a single variable. We describe how to use counterfactual reasoning to simulate features where the association between the response and the features is due exclusively to the causal effects of interest. For simplicity, we assume that the data is generated from a standardized linear model\footnote{Note that any linear model $Z^o_k = \mu_k + \Sigma_j \beta_{kj} Z^o_j + U^o_k$, where $Z^o_k$ represents the original data, can be reparameterized into its equivalent standardized form $Z_k = \sum_{j} \theta_{kj} Z_j + U_k$, where $Z_k = (Z^o_k - E(Z^o_k))/\sqrt{Var(Z^o_k)}$ represent standardized variables with $E(Z_k) = 0$ and $Var(Z_k) = 1$; $\theta_{{Z_k}{Z_j}} = \beta_{{Z_k}{Z_j}} \sqrt{Var(Z^o_j)/Var(Z^o_k)}$ represent the path coefficients; and $U_k = U^o_k/\sqrt{Var(Z^o_k)}$ represent the standardized error terms.}, so that the variances of $X$, $C$, $M$, and $Y$ are equal to 1, and the direct causal effect of a variable $Z_j$ on another variable $Z_k$ is represented by the path coefficient~\cite{wright1934}, $\theta_{{Z_k}{Z_j}}$.

The anticausal task presented in Figure \ref{fig:causal.anticausal} is represented by the set of structural equations, $C = U_C$, $Y = \theta_{YC} \, C + U_Y$, $M = \theta_{MC} \, C + \theta_{MY} Y + U_M$, and $X = \theta_{XC} \, C + \theta_{XM} \, M + \theta_{XY} \, Y + U_X$, where $U_C$, $U_Y$, $U_M$, and $U_X$ are independent background (residual) variables.  Using Wright's method of path analysis~\cite{wright1934}, we have that the total covariance (correlation) between $X$ and $Y$,
$$
Cov(X, Y) = \underbrace{\theta_{XY}}_{X \leftarrow Y} + \underbrace{\theta_{XM} \, \theta_{MY}}_{X \leftarrow M \leftarrow Y} + \underbrace{\theta_{XC} \, \theta_{YC}}_{X \leftarrow C \rightarrow Y}~.
$$
can be decomposed into the contribution of the direct causal path, $Y \rightarrow X$, the indirect causal path $Y \rightarrow M \rightarrow X$, and the spurious association generated by the backdoor path $X \leftarrow C \rightarrow Y$. Clearly, the predictive performance of any ML model trained with data generated by this model will be biased by the influence of the confounder $C$ since the learner will leverage the total association between $X$ and $Y$ during training.

Now, suppose that our goal is to build a ML model whose predictive performance is only informed by the direct influence of $Y$ on $X$ and is free from the influence of $C$, as well as, from the indirect influence of $Y$ that is mediated by $M$. To this end, we need to simulate counterfactual data where the association between $X$ and $Y$ is due exclusively to the direct causal effect of $Y$ on $X$. In other words, we want to simulate counterfactual feature data, $X^\ast$, such that $Cov(X^\ast, Y) = \theta_{XY}$. In theory, this could be done by simulating data according to the twin network\footnote{The twin network approach provides a graphical method for evaluating conditional independence relations between counterfactual and factual variables. The basic idea is to use two networks, one representing the factual world and the other the counterfactual world, which share the same background (residual) variables. The factual network (shown to the left of the residual terms) represents the data generation process for the original data, while the counterfactual network (show to the right of the residual terms) shows the modified causal model.}~\cite{balke1994,pearl2009} in Figure \ref{fig:directtwin}, where the new counterfactual feature data, $X^\ast$, is generated from the model $X^\ast = \theta_{XY} Y + U_X$. (In practice, we can estimate $\theta_{XY}$ and $U_X$ by regressing $X$ on $C$, $M$ and $Y$, and simulate the counterfactual feature data using $\hat{X}^\ast = X - \hat{\theta}_{XC} C = \hat{\theta}_{XY} Y + \hat{U}_X$. In other words, we can employ a variation of Pearl's ``abduction, action, prediction" approach to simulate deterministic counterfactuals~\cite{pearl2009,pearl2016}. In the next subsection we explain in detail how the proposed approach differs from Pearl's approach at the ``action" step.) Direct calculation of the covariance between $X^\ast$ and $Y$ shows that,
\begin{align}
Cov&(X^\ast, Y) = Cov(\theta_{XY} Y + U_X, Y) = \theta_{XY} \, Var(Y) + Cov(U_X, Y) = \theta_{XY}~.
\label{eq:anticausal.direct}
\end{align}
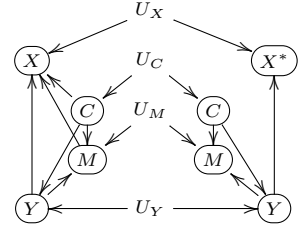
\begin{wrapfigure}{r}{0.27\textwidth}
\vskip -0.1in
{\scriptsize
$$
\xymatrix@-1.4pc{
& & U_X \ar[dll] \ar[drr] & & \\
*+[F-:<10pt>]{X} & & U_C \ar[dl] \ar[dr] & & *+[F-:<10pt>]{X^\ast} \\
& *+[F-:<10pt>]{C} \ar[ul] \ar[ddl] \ar[d] & U_M \ar[dl] \ar[dr] & *+[F-:<10pt>]{C} \ar[rdd] \ar[d] && \\
& *+[F-:<10pt>]{M} \ar[uul] && *+[F-:<10pt>]{M} && \\
*+[F-:<10pt>]{Y} \ar[uuu] \ar[ur] & & U_Y \ar[ll] \ar[rr] & & *+[F-:<10pt>]{Y} \ar[uuu] \ar[ul] \\
}
$$}
\vskip -0.2in
  \caption{Twin network approach in the case where the direct effect represents the causal effect of interest.}
  \label{fig:directtwin}
  \vskip -0.1in
\end{wrapfigure}

Supplementary Section 2 describes the cases where the goal is to build a ML model whose predictive performance is only informed by the indirect causal effect of $Y$ on $X$, as well as, when the goal is to capture the predictive performance informed by the spurious associations generated by the confounder alone. At this point, a natural question is whether alternative interventions would also work. In Supplementary Section 3, we show that a requirement for the intervention to work is that $Y$ is not altered by the intervention. Furthermore, in Supplementary Section 4 we also show that node-splitting transformations in SWIGs~\cite{richardson2013} can also be used as alternative interventions.

\noindent \textbf{Remarks} It is important to highlight that our proposed interventions are different from Pearl's atomic $do(Z=z)$ interventions, and that our counterfactual approach is implemented using a modification of Pearl's ``abduction, action, prediction" procedure for the computation of deterministic counterfactuals. While in Pearl's approach the action step is enforced by a $do(Z=z)$ intervention, where the causal structural model $Z = f(pa(Z), U_Z)$ is replaced by $Z = z$, our interventions are different. For instance, in the case where the direct effect represents the causal effect of interest, our intervention corresponds to replacing $X = f_X(pa(X), U_X) = f_X(C, M, Y, U_X)$ by $X = f_X(pa(X) \setminus \{C \cup M\}, U_X) = f_X(Y, U_X)$. (Note that while our interventions at the action step differs from Pearl's approach, the abduction and prediction steps are still the same.) Also, from a more ``philosophical" point of view, note that even though our proposed interventions represent a different type of microsurgery on the structural causal models, they are still consistent with Lewis' framework of possible worlds~\cite{lewis2013}. Instead of considering counterfactual worlds that develop from different actions than the actions taken in the factual world, our approach considers counterfactual worlds where the data generation mechanisms/laws are different from the mechanisms/laws of the factual world\footnote{As an example, consider an anticausal prediction task described by the DAG $\xymatrix@-1.0pc{C \ar[r] \ar@/^0.5pc/[rr] & X & Y \ar[l]}$, where $Y$ represents the severity score of a disease, $X$ represents a symptom, $C$ represents age, and where the goal is to predict $Y$ using $X$, after removing the spurious association generated by $C$. In our proposed approach, we consider a counterfactual world, $\xymatrix@-1.0pc{C \ar@/^0.5pc/[rr] & X^\ast & Y \ar[l]}$, where age no longer influences the symptom $X$. Note that this intervention can be seen as a type of soft or stochastic intervention where the data generation process differs from the natural system only in the mechanism associated with the feature $X$. Related types of soft/stochastic interventions have been studied in~\cite{correiabareiboim2020,kocaoglu2019,eberhardt2007,malinsky2018}.}. Observe, as well, that our interventions operate at the population level, rather than at the individual level.

\subsection{The multivariate case}

Next, we extend our results to the multivariate case, where the nodes $\bfX$, $\bfC$, and $\bfM$ in Figure \ref{fig:causal.anticausal} represent arbitrary DAG subdiagrams. But first, we describe how we can always reparameterize linear structural causal models in a way that, in practice, we do not need to know how the DAG subdiagrams are organized in order to estimate the causal effects and the residuals employed in the computation of the counterfactual data.

\subsubsection{Reparameterization in linear models}
For linear structural causal models, we can always reparameterize any arbitrary DAG model to a simpler model where the covariance structure between the observed variables is ``pushed" to the unobserved error terms. Figure \ref{fig:dagsimplification} provides an illustrative example of this well-known fact in the structural equations modelling literature~\cite{sobel1987,bollen1989}.
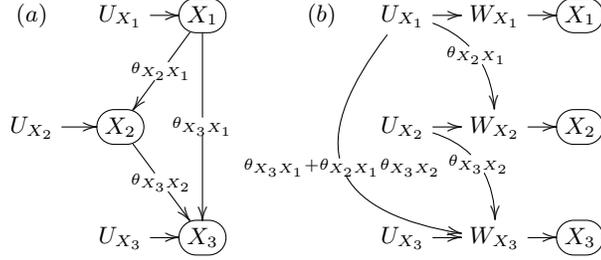
\begin{wrapfigure}{r}{0.58\textwidth}
{\footnotesize
$$
\xymatrix@-1.1pc{
(a) & U_{X_1} \ar[r] & *+[F-:<10pt>]{X_1} \ar[ddl]|-{\theta_{{X_2}{X_1}}} \ar[dddd]|-{\theta_{{X_3}{X_1}}} && (b) & U_{X_1} \ar[r] \ar@/^1.0pc/[ddr]|-{\theta_{{X_2}{X_1}}} \ar@/_3.6pc/[ddddr]|-{\theta_{{X_3}{X_1}} + \theta_{{X_2}{X_1}} \theta_{{X_3}{X_2}}} & W_{X_1} \ar[r] & *+[F-:<10pt>]{X_1} \\
&& \\
U_{X_2} \ar[r] & *+[F-:<10pt>]{X_2} \ar[ddr]|-{\theta_{{X_3}{X_2}}} &&&&  U_{X_2} \ar[r] \ar@/^1.1pc/[ddr]|-{\theta_{{X_3}{X_2}}} & W_{X_2} \ar[r] & *+[F-:<10pt>]{X_2} \\
&& \\
& U_{X_3} \ar[r] & *+[F-:<10pt>]{X_3} &&&  U_{X_3} \ar[r] & W_{X_3} \ar[r] & *+[F-:<10pt>]{X_3} \\
}
$$}
\vskip -0.1in
  \caption{Original (a) and reparameterized (b) models.}
  \label{fig:dagsimplification}
\end{wrapfigure}
The DAG in panel a represents the actual data generation process for the variables $\bfX = (X_1, X_2, X_3)^T$, where the error terms $\bfU_X = (U_{X_1}, U_{X_2}, U_{X_3})^T$ are independent, whereas the DAG in panel b shows the reparameterized model with correlated error terms $\bfW_X = (W_{X_1}, W_{X_2}, W_{X_3})^T$. The set of linear structural causal models describing the DAG in Figure \ref{fig:dagsimplification}a is given by, $\bfX = \bfTheta_{XX} \, \bfX + \bfU_X$, which can be reparameterized as $\bfX = \bfW_X$, where $\bfW_X = (\bfI - \bfTheta_{XX})^{-1} \bfU_X$\footnote{Explicitly, we have that,
{\tiny
\begin{equation*}
\underbrace{
\begin{pmatrix}
X_1 \\
X_2 \\
X_3 \\
\end{pmatrix}}_{\bfX}
=
\underbrace{
\begin{pmatrix}
0 & 0 & 0 \\
\theta_{{X_2}{X_1}} & 0 & 0 \\
\theta_{{X_3}{X_1}} & \theta_{{X_3}{X_2}} & 0 \\
\end{pmatrix}}_{\bfTheta_{XX}}
\underbrace{
\begin{pmatrix}
X_1 \\
X_2 \\
X_3 \\
\end{pmatrix}}_{\bfX}
+
\underbrace{
\begin{pmatrix}
U_{X_1} \\
U_{X_2} \\
U_{X_3} \\
\end{pmatrix}}_{\bfU_X}~, \;\;\;
\underbrace{
\begin{pmatrix}
W_{X_1} \\
W_{X_2} \\
W_{X_3} \\
\end{pmatrix}}_{\bfW_X}
=
\underbrace{
\begin{pmatrix}
1 & 0 & 0 \\
\theta_{{X_2}{X_1}} & 1 & 0 \\
\theta_{{X_3}{X_1}} + \theta_{{X_2}{X_1}} \theta_{{X_3}{X_2}} & \theta_{{X_3}{X_2}} & 1 \\
\end{pmatrix}}_{(\bfI - \bfTheta_{XX})^{-1}}
\underbrace{
\begin{pmatrix}
U_{X_1} \\
U_{X_2} \\
U_{X_3} \\
\end{pmatrix}}_{\bfU_X}~.
\end{equation*}}
Note that because model $\bfX = \bfW_X$ is just a reparameterization of model $\bfX = \bfTheta_{XX} \, \bfX + \bfU_X$, we have that the association structure between the $X_j$ variables is still the same after the model reparameterization. Observe, as well, that for any arbitrary DAG, the matrix $(\bfI - \bfTheta_{XX})$ is always invertible (as fully explained in Supplementary Section 5.1).}.

Next, we describe the above reparameterization for the arbitrary anticausal predictive task. From the DAG in Figure \ref{fig:causal.anticausal}, we have that the joint distribution of the anticausal prediction tasks is factorized as,
\begin{equation*}
P(\bfC, Y, \bfM, \bfX) \, = \, P(\bfC) \, P(Y \mid \bfC) \, P(\bfM \mid \bfC, Y) \, P(\bfX \mid \bfC, \bfM, Y)~,
\end{equation*}
where the components of this factorization are described, respectively, by the structural causal models,
\begin{align*}
\bfC &= \bfTheta_{CC} \, \bfC + \bfU_C~, \hspace{1.0cm} Y = \bfTheta_{YC} \, \bfC + U_Y~, \\
\bfM &= \bfTheta_{MM} \, \bfM + \bfTheta_{MC} \, \bfC + \bfTheta_{MY} \, Y + \bfU_M~, \\
\bfX &= \bfTheta_{XX} \, \bfX + \bfTheta_{XC} \, \bfC + \bfTheta_{XM} \, \bfM + \bfTheta_{XY} \, Y + \bfU_X~,
\end{align*}
where $\bfU_C$, $U_Y$, $\bfU_M$, and $\bfU_X$ are vectors of independent error terms with zero mean and finite variance; $\bfTheta_{CC}$, $\bfTheta_{MM}$, and $\bfTheta_{XX}$ represent, respectively, square matrices of dimension $n_C \times n_C$, $n_M \times n_M$, and $n_X \times n_X$, containing the path coefficients connecting the confounders among themselves, the mediators among themselves and the features among themselves; and $\bfTheta_{YC}$, $\bfTheta_{MC}$, $\bfTheta_{MY}$, $\bfTheta_{XC}$, $\bfTheta_{XM}$, and $\bfTheta_{XY}$, represent retangular matrices of path coefficients connecting variables from separate sets. (For instance, $\bfTheta_{MC}$, corresponds to a $n_M \times n_C$ matrix of path coefficients connecting confounder variables to mediator variables, whereas $\bfTheta_{XY}$, corresponds to a $n_X \times 1$ matrix of path coefficients connecting the response to the features.)

Using simple algebraic manipulations, we can re-write the above linear structural models as,
\begin{align*}
\bfC &= \bfW_C~, \hspace{1.0cm} Y = \bfGamma_{YC} \, \bfC + W_Y~, \\
\bfM &= \bfGamma_{MC} \, \bfC + \bfGamma_{MY} \, Y + \bfW_M~, \\
\bfX &= \bfGamma_{XC} \, \bfC + \bfGamma_{XM} \, \bfM + \bfGamma_{XY} \, Y + \bfW_X~,
\end{align*}
where $W_Y = U_Y$, and $\bfW_V = (\bfI - \bfTheta_{VV})^{-1} \bfU_V$ for $V$ equal to $C$, $M$, or $X$, and $\bfGamma_{YC} = \bfTheta_{YC}$, and $\bfGamma_{ZV} = (\bfI - \bfTheta_{ZZ})^{-1} \bfTheta_{ZV}$ for $\{Z,V\}$ pairs equal to $\{M,C\}$, $\{M,Y\}$, $\{X,C\}$, $\{X,M\}$, and $\{X,Y\}$. Supplementary Section 5 presents a concrete illustrative example of the above reparameterization.

\subsubsection{Estimation of causal effects and residuals in the reparameterized model}

In practice, our counterfactual approach requires the estimation of causal effects and residuals using regression models. For an anticausal task, we regress each feature $X_j$, $j = 1, \ldots, n_X$, on the set of observed confounders and mediators using the regression equations, $X_j = \sum_{k=1}^{n_C} \gamma_{{X_j}{C_k}} C_k + \sum_{k=1}^{n_M} \gamma_{{X_j}{M_k}} M_k + \gamma_{{X_j}{Y}} \, Y + W_{X_j}$, to estimate the causal effects $\hat{\gamma}_{{X_j}{C_k}}$, $\hat{\gamma}_{{X_j}{M_k}}$, $\hat{\gamma}_{{X_j}{Y}}$, and residuals $\hat{W}_{X_j}$ using least squares\footnote{Here, we assume that the number of samples is larger than the number of covariates in the regression fits, and that multicolinearity is not an issue too. Note that we do not need to assume Gaussian error terms.}, and then generate counterfactual features by adding back the estimated residuals to a linear predictor containing only the causal effects of interest. That is, in order to estimate the predictive performance that is separately due to direct causal effects, indirect causal effects, or confounding, we generate counterfactual features using, respectively, $\hat{\bfX}^\ast = \hat{\bfGamma}_{XY} Y + \hat{\bfW}_X$, $\hat{\bfX}^\ast = \hat{\bfGamma}_{XM} \, \hat{\bfM}^\ast + \hat{\bfW}_X$\footnote{Where, $\hat{\bfM}^\ast = M - \hat{\bfGamma}_{MC} \, C = \hat{\bfGamma}_{MY} \, Y + \hat{\bfW}_M$ is calculated by first fitting the regressing models $M_j = \sum_{k=1}^{n_C} \gamma_{{M_j}{C_k}} C_k + \gamma_{{M_j}{Y}} \, Y + W_{M_j}$, to estimate the causal effects $\hat{\gamma}_{{M_j}{C_k}}$, $\hat{\gamma}_{{M_j}{Y}}$ and error terms $\hat{W}_{M_j}$.}, or $\hat{\bfX}^\ast = \hat{\bfGamma}_{XC} \, \bfC + \hat{\bfW}_X$. Importantly, note that when we regress $X_j$ on $\bfC$, $\bfM$, and $Y$ only the coefficients associated with the parents of $X_j$ in the reparameterized model will be statistically different from zero (for large enough sample sizes). Therefore, in practice, we don't need to know before hand which variables are the parents of $X_j$ in the reparameterized model. The parent set will be learned automatically from the data by the regression model fit. (This, of course, assumes the absence of unmeasured confounders. Supplementary Section 6 provides further remarks on potential identification issues.)

\subsubsection{The connection between covariances and causal effects in the multivariate general case}

Here, we extend the univariate results of Section 4.1 to the multivariate case (see Supplementary Section 7 for the proofs).
\begin{theorem}
Consider an anticausal prediction task:
\begin{enumerate}[leftmargin=*]
\item For causal effects generated by the paths in $Y \rightarrow \bfX$, if $\bfX^\ast$ is given by $\bfX^\ast = \bfGamma_{XY} \, Y + \bfW_X$, then $Cov(\bfX^\ast, Y) = \bfGamma_{XY}$.
\item For causal effects generated by the paths in $Y \rightarrow \bfM \rightarrow \bfX$, if  $\bfX^\ast$ is given by $\bfX^\ast = \bfGamma_{XM} \, \bfM^\ast + \bfW_X$, and $\bfM^\ast = \bfGamma_{MY} \, Y + \bfW_M$, then $Cov(\bfX^\ast, Y) = \bfGamma_{XM} \, \bfGamma_{MY}$.
\item For the spurious associations generated by the paths in $\bfX \leftarrow \bfC \rightarrow Y$, if  $\bfX^\ast$ is given by $\bfX^\ast = \bfGamma_{XC} \, \bfC + \bfW_X$, then $Cov(\bfX^\ast, Y) = \bfGamma_{XC} \, Cov(\bfC) \, \bfGamma_{YC}^T$.
\end{enumerate}
\end{theorem}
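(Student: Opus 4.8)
The plan is to prove each of the three claims by a direct computation of the covariance, exploiting the reparameterized structural equations and the key fact that in the reparameterized model the error terms $\bfW_C$, $W_Y$, $\bfW_M$, $\bfW_X$ are mutually independent (they are linear images of the independent blocks $\bfU_C$, $U_Y$, $\bfU_M$, $\bfU_X$ through block-triangular invertible maps that do not mix blocks). Throughout I will use that the data is standardized so $Var(Y) = 1$, and more generally that I can read off $Cov(\cdot,\cdot)$ by bilinearity once the relevant independences are in hand. The structure mirrors the univariate identity \eqref{eq:anticausal.direct}, so the main work is bookkeeping with matrices rather than any genuinely new idea.

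For part 1, I would write $Cov(\bfX^\ast, Y) = Cov(\bfGamma_{XY}\,Y + \bfW_X,\, Y) = \bfGamma_{XY}\,Var(Y) + Cov(\bfW_X, Y)$. Then $Var(Y)=1$ by standardization, and $Cov(\bfW_X,Y)=\bfzero$ because $Y = \bfGamma_{YC}\bfC + W_Y = \bfGamma_{YC}\bfW_C + W_Y$ is a function of $\bfW_C$ and $W_Y$ only, both of which are independent of $\bfW_X$; hence $Cov(\bfX^\ast,Y)=\bfGamma_{XY}$. For part 2, substitute $\bfM^\ast = \bfGamma_{MY}\,Y + \bfW_M$ into $\bfX^\ast = \bfGamma_{XM}\,\bfM^\ast + \bfW_X$ to get $\bfX^\ast = \bfGamma_{XM}\bfGamma_{MY}\,Y + \bfGamma_{XM}\bfW_M + \bfW_X$, and then $Cov(\bfX^\ast,Y) = \bfGamma_{XM}\bfGamma_{MY}\,Var(Y) + \bfGamma_{XM}\,Cov(\bfW_M,Y) + Cov(\bfW_X,Y)$; again $Var(Y)=1$ and both remaining covariances vanish because $\bfW_M$ and $\bfW_X$ are independent of $(\bfW_C, W_Y)$ and therefore of $Y$, giving $Cov(\bfX^\ast,Y)=\bfGamma_{XM}\bfGamma_{MY}$. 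For part 3, from $\bfX^\ast = \bfGamma_{XC}\,\bfC + \bfW_X$ we get $Cov(\bfX^\ast, Y) = \bfGamma_{XC}\,Cov(\bfC, Y) + Cov(\bfW_X, Y)$; the second term is $\bfzero$ as before, and for the first I use $Y = \bfGamma_{YC}\bfC + W_Y$ with $W_Y \perp\!\!\!\perp \bfC$ to obtain $Cov(\bfC, Y) = Cov(\bfC)\,\bfGamma_{YC}^T$, so $Cov(\bfX^\ast,Y)=\bfGamma_{XC}\,Cov(\bfC)\,\bfGamma_{YC}^T$.

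The one point requiring a little care — and the closest thing to an obstacle — is justifying the cross-covariance vanishings rigorously: I must argue that $\bfW_X$ (and $\bfW_M$) are independent of $Y$, which rests on the claim that the reparameterization keeps the four noise blocks uncoupled. This follows because $\bfW_V = (\bfI - \bfTheta_{VV})^{-1}\bfU_V$ involves only $\bfU_V$ for each $V \in \{C, M, X\}$ and $W_Y = U_Y$, so the transformed errors are deterministic functions of disjoint groups of the original independent errors; hence any two distinct blocks are independent, and since $Y$ is a function of $(\bfW_C, W_Y)$ alone it is independent of $\bfW_M$ and of $\bfW_X$. I would state this independence as a short preliminary observation (or cite the reparameterization discussion preceding the theorem) and then the three computations are immediate. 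It is also worth noting explicitly that in part 2 we need $\bfW_M \perp\!\!\!\perp W_Y$ and $\bfW_M \perp\!\!\!\perp \bfW_C$ (not merely $\bfW_M \perp\!\!\!\perp \bfW_X$), which the same block-disjointness argument supplies.
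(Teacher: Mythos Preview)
Your proposal is correct and follows essentially the same direct-computation route as the paper's proof, using bilinearity of the cross-covariance operator, $Var(Y)=1$, and the vanishing of cross-terms like $Cov(\bfW_X,Y)$ and $Cov(\bfW_M,Y)$. In fact you are more explicit than the paper about \emph{why} those cross-terms vanish (the block-disjointness of the reparameterized errors $\bfW_C, W_Y, \bfW_M, \bfW_X$), whereas the paper simply drops them without comment.
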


The above result, together with the estimation approach described In Section 4.2.2, show that by generating causality-aware counterfactual features, $\bfX^\ast$, and then training and evaluating ML learners on this counterfactual data, we are able to leverage only the associations generated by the causal mechanisms of interest. Quite importantly, because the counterfactual data is estimated from the reparameterized model, the approach does not require full knowledge of the causal graph. It suffices to know which variables are confounders and which are mediators.

\section{Confounding adjustment in anticausal tasks}

\subsection{An algorithmic description for confounding adjustment}

When the goal is confounding adjustment, the causality-aware features are generated according to Algorithm 1.
\begin{algorithm}[!h]
\caption{Causality-aware feature computation in anticausal prediction tasks}\label{alg:counterfactual.adjustment}
\KwData{Training data, $\{\bfX_{tr}, \bfC_{tr}, Y_{tr}\}$; test set features and confounders, $\{\bfX_{ts},\bfC_{ts}\}$.}
\ShowLn \For{each feature $X_j$} {
\ShowLn $\bullet$ Using the training set, estimate regression coefficients and residuals from, $X_{j,tr} = \mu_j^{tr} + \beta_{{X_j} Y}^{tr} \, Y_{tr} + \sum_i \beta_{{X_j} {C_i}}^{tr} \, C_{i,tr} + W_{X_j}^{tr}$, and then compute the respective counterfactual feature as, $\hat{X}_{j,tr}^{\ast} = \hat{\mu}_j^{tr} + \hat{\beta}_{{X_j} Y}^{tr} \, Y_{tr} + \hat{W}_{X_j}^{tr}$. \\
\ShowLn $\bullet$ Using the test set, compute the counterfactual feature, $\hat{X}_{j,ts}^{\ast} = X_{j,ts} - \sum_i \hat{\beta}_{{X_j} {C_i}}^{tr} \, C_{i,ts}$.
}
\KwResult{Counterfactual features, $\hat{\bfX}^{\ast}_{tr}$ and $\hat{\bfX}^{\ast}_{ts}$.}
\end{algorithm}
Observe that the algorithm requires test set confounding data (but not the test set labels). Note that for large sample sizes, and under the assumption that the causal effects are stable between the training and test sets, we have that $\hat{\beta}_{{X_j} {C_i}}^{tr} \approx \hat{\beta}_{{X_j} {C_i}}^{ts}$ so that we can estimate the test set counterfactual features without using test set labels since,
\begin{align*}
X_{j,ts}^{\ast} &= X_{j,ts} - \sum \hat{\beta}_{{X_j} {C_i}}^{tr} \, C_{i,ts} \approx X_{j,ts} - \sum \hat{\beta}_{{X_j} {C_i}}^{ts} \, C_{i,ts} = \hat{\mu}_j^{ts} + \hat{\beta}_{{X_j} Y}^{ts} \, Y_{ts} + \hat{W}_{X_j}^{ts}~.
\end{align*}

\subsection{Dataset shifts generated by selection biases}

In anticausal prediction tasks, dataset shifts in the joint distribution of the confounders and outcome variable, $P(\bfC, Y)$, are often caused by selection biases. The confounded anticausal prediction task influenced by selection bias is described by the causal graph in Figure \ref{fig:anticausal.task}, where the auxiliary
\begin{wrapfigure}{r}{0.19\textwidth}
\vskip -0.1in
$$
\xymatrix@-1.2pc{
 & *+[F-:<10pt>]{\bfC} \ar[dl] \ar[dr] \ar[r] & *+[F]{S} \\
*+[F-:<10pt>]{\bfX}  & & *+[F-:<10pt>]{Y} \ar[ll] \ar[u]}
$$
\vskip -0.1in
\caption{}
\vskip -0.1in
\label{fig:anticausal.task}
\end{wrapfigure}
variable $S$ indicates the presence of a selection mechanism contributing to the association between $\bfC$ and $Y$. (Here, $S$ represents a binary variable which indicates whether the sample was included or not in the dataset, and the square frame around $S$ indicates that our dataset is generated conditional on $S$ being set to 1. Note that, the application of the d-separation criterion~\cite{pearl2009} to the causal graph shows that because $S$ is a collider, we have that, conditional on $S = 1$, the additional path $\bfC \rightarrow S \leftarrow Y$ is open and, therefore, contributes to the association between $\bfC$ and $Y$.) In the stability analysis that we present in the next subsection, we assume that the causal effects $\bfBeta_{XY}$ and $\bfBeta_{XC}$ and the residual covariance, $Cov(\bfU_X)$, are the same across the training and test sets, so that $P(\bfX \mid \bfC, Y)$ is stable. We also assume that the causal effect $\beta_{YC}$ is stable, and that the dataset shifts in $P(\bfC, Y)$ are generated by selection biases.

\subsection{Stability under dataset shifts of $P(\bfC, Y)$ generated by selection biases}

While it might seen intuitive that training a learner on unconfounded data will prevent it from learning the confounding signal and, therefore, will lead to more stable predictions in shifted target populations\footnote{Examples of approaches that only adjust the training data include pre-processing techniques to reduce discrimination in ML~\cite{calders2009,kamiran2012}.}, here we show that adjusting the training data alone is insufficient, and that better stability can be achieved by deconfounding the test set features as well.

Next, we present an analysis of this issue using a toy linear model example (the result, nonetheless, holds for more general linear models, as described in Supplementary Section 8). Consider the causal graph in $\xymatrix@-1.0pc{C \ar[r] \ar@/^0.5pc/[rr] & X & Y \ar[l]}$ where $C = U_C$, $Y = \beta_{YC} \, C + U_Y$, and $X = \beta_{XY} \, Y + \beta_{XC} \, C + U_X$, with $E[U_V] = 0$, $Var(U_V) = \sigma^2_V$, for $V = \{C, Y, X\}$.
The goal is to predict the outcome $Y$ using the feature $X$. Assume without loss of generality that the data has been centered. Let $\hat{Y} = X_{ts} \hat{\beta}_{tr}$ represent the test set prediction from a linear regression model, where $\hat{\beta}_{tr}$ represents the coefficient estimated with the training data, and $X_{ts}$ represents the test set feature. By definition the expected MSE is given by,
\begin{align}
E[(Y_{ts} - \hat{Y})^2] &= E[Y^2_{ts}] + E[\hat{Y}^2] - 2 E[\hat{Y} Y_{ts}] = Var[Y_{ts}] + E[\hat{Y}^2] - 2 Cov(\hat{Y}, Y_{ts}) \nonumber \\
&= Var[Y_{ts}] + \hat{\beta}_{tr}^2 Var[X_{ts}] - 2 \hat{\beta}_{tr} Cov(X_{ts}, Y_{ts})~, \label{eq:expected.mse}
\end{align}
where the expectation is w.r.t. the test set (so that $\hat{\beta}_{tr}$ is a fixed constant w.r.t. the expectation).

For any approach which does not process the test set features we have that,
\begin{align*}
Var(X_{ts}) &= Var(\beta_{XY} Y_{ts} + \beta_{XC} C + U_X) \\
&= \sigma^2_X + \beta_{XY}^2 Var(Y_{ts}) + \beta_{XC}^2 Var(C_{ts}) + 2 \beta_{XY} \beta_{XC} Cov(Y_{ts}, C_{ts})~,
\end{align*}
\begin{align*}
Cov(X_{ts}, Y_{ts}) &= Cov(\beta_{XY} Y_{ts} + \beta_{XC} C_{ts} + U_X, Y_{ts}) = \beta_{XY} Var(Y_{ts}) + \beta_{XC} Cov(Y_{ts}, C_{ts})
\end{align*}
showing that both $Var(X_{ts})$ and $Cov(X_{ts}, Y_{ts})$ depend on $Cov(Y_{ts}, C_{ts})$ (so that the $E[MSE]$ will be unstable under dataset shifts of the association between the confounder and the outcome variable). On the other hand, we have that for the causality-aware approach,
\begin{equation*}
Var(X_{ts}^\ast) = Var(\beta_{XY} Y_{ts} + U_X) = \beta_{XY}^2 Var(Y_{ts}) + \sigma^2_X~,
\end{equation*}
\begin{equation*}
Cov(X_{ts}^\ast, Y_{ts}) = Cov(\beta_{XY} Y_{ts} + U_X, Y_{ts}) = \beta_{XY} Var(Y_{ts})~,
\end{equation*}
do not depend on $Cov(Y_{ts}, C_{ts})$, so that the $E[MSE]$ will be stable w.r.t. this particular type of dataset shift (although, as shown by eq. (\ref{eq:expected.mse}) it will be still influenced by dataset shifts on $Var(Y_{ts})$). Note that this is true even when we apply a confounding adjustment to the training set (a situation where the $\hat{\beta}_{tr}$ estimate is not influenced by the spurious associations generated by the confounder). This explains why it is not enough to deconfound the training features alone. While training a regression model using deconfounded features allows us to estimate deconfounded model weights,  $\hat{\bfBeta}_{tr}$, the prediction $\hat{Y} = \bfX_{ts} \hat{\bfBeta}_{tr}$ is a function of both the trained model $\hat{\bfBeta}_{tr}$ and the test set feature, $\bfX_{ts}$. As a consequence, if we do not deconfound the test set features, the expected MSE will still be influenced by the confounders (since, in anticausal prediction tasks, the original test set features, $X_{j,ts} = \beta_{{X_j}Y} Y_{ts} + \beta_{XC} C_{ts} + U_{X_j}$ are still functions of the confounder variable). This point is described in more general terms in Supplementary Section 9, where we show that the expected value of an arbitrary performance metric is still a function of $C_{ts}$ when the features $X_{j,ts}$ are generated by arbitrary structural causal model $X_{j,ts} = f(Y_{ts}, C_{ts}, U_X^{ts})$, even when we train the ML model using deconfounded training set features, $X_{j,tr}^\ast = f^\ast(Y_{tr}, U_X^{tr})$.

\subsection{Synthetic data experiments}

We illustrate the above points in synthetic data experiments investigating the influence of dataset shifts in $P(\bfC, Y)$ on the predictive performance (measured by MSE). In order to investigate the influence of shifts in $Var(Y_{ts})$ on the prediction stability, we performed two experiments, where $Var(Y_{ts})$ was kept constant in the first, but was allowed to vary in the second. In both experiments, we compared the causality-aware adjustment against two alternative approaches denoted as \textit{baseline 1} and \textit{baseline 2} adjustments. The \textit{baseline 1} adjustment represents approaches that remove the causal effect of the confounders on the features in the training set alone, while \textit{baseline 2} represents approaches that remove the association between the confounders and the output in the training set alone (see Supplementary Section 10 for further details). For completeness we also report results based on the ``\textit{no adjustment}" approach, where no adjustments are applied to the training or test sets.

Each experiment was based on 1,000 replications where, for each replication, we generated training sets with $Var(Y_{ts}) = 1$, $Var(C_{ts}) = 1$, and $Cov(C_{ts}, Y_{ts}) = 0.8$, and 9 distinct test sets showing increasing amounts of dataset shifts in the $P(\bfC, Y)$ relative to the training data. In the first experiment (the fixed $Var(Y_{ts})$ case), this was accomplished by varying $Cov(Y_{{ts}}, C_{{ts}})$ according to $\{0.8$, $0.6$, $0.4$, $0.2$, $0.0$, $-0.2$, $-0.4$, $-0.6$, $-0.8\}$ across the 9 test sets, and by varying $Var(C_{{ts}})$ according to $\{1.00$, $1.25$, $1.50$, $1.75$, $2.00$, $2.25$, $2.50$, $2.75$, $3.00\}$, while keeping $Var(Y_{{ts}})$ fixed at 1. In the second experiment (the varying $Var(Y_{ts})$ case), we varied $Cov(Y_{{ts}}, C_{{ts}})$ as before, but kept $Var(C_{{ts}})$ fixed at 1, while increasing $Var(Y_{{ts}})$ according to $\{1.00$, $1.25$, $1.50$, $1.75$, $2.00$, $2.25$, $2.50$, $2.75$, $3.00\}$ across the test sets.

Our experiments were based on linear models containing 10 features and 1 confounder, and on training and test sets containing 1,000 samples. (See Supplementary Section 10 for further details on the synthetic data generation and simulation parameter choices.) The causal effects $\bfBeta_{XY}$, $\bfBeta_{XC}$, and $\beta_{YC}$ and $Cov(\bfU_X)$ were kept constant across the training and test sets in order to guarantee that $P(\bfX \mid \bfC, Y)$ was stable. 


Figures \ref{fig:mpower.aucs.stability.regr.e1} and \ref{fig:mpower.aucs.stability.regr.e2} report the results for the fixed and varying $Var(Y_{ts})$ cases, respectively. In both figures, panels a to d report boxplots of the MSE scores (y-axis) across 1,000 simulation replications for the 9 test sets (x-axis), while panel e presents a comparison of the stability-errors, defined as the standard deviation of the MSE scores across the 9 test sets in each simulation replication.

Figure \ref{fig:mpower.aucs.stability.regr.e1} reports the results for the first experiment. Note that because we kept $Var(Y_{ts})$ constant across the test sets we see perfect stability for the causality-aware approach (panel a). (Observe that varying $Cov(Y_{ts}, C_{ts})$ and $Var(C_{ts})$ has no influence on the stability of the results, since the expected MSE for the causality-aware approach only depends on $Var(Y_{ts})$.)

Figure \ref{fig:mpower.aucs.stability.regr.e2} reports results for the second experiment based on increasing $Var(Y_{ts})$ values. As expected, we now observe instability in the causality-aware approach too. The causality-aware predictions, however, are still more stable than the predictions from the other approaches.

\begin{figure*}[!h]
\centerline{\includegraphics[width=\linewidth]{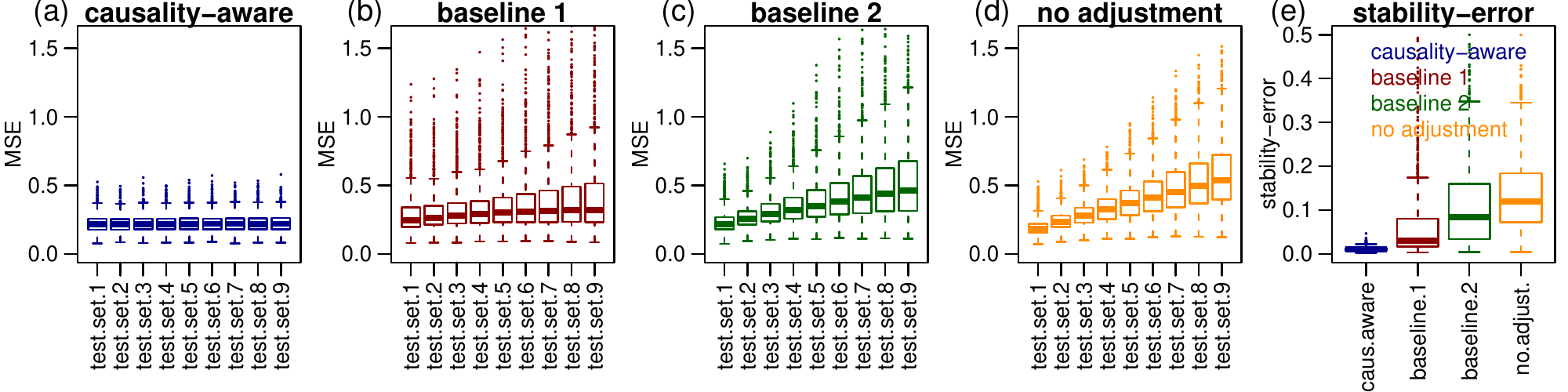}}
\vskip -0.1in
\caption{Synthetic data experiment results for the fixed $Var(Y_{ts})$ case.}
\label{fig:mpower.aucs.stability.regr.e1}
\end{figure*}

\begin{figure*}[!h]
\centerline{\includegraphics[width=\linewidth]{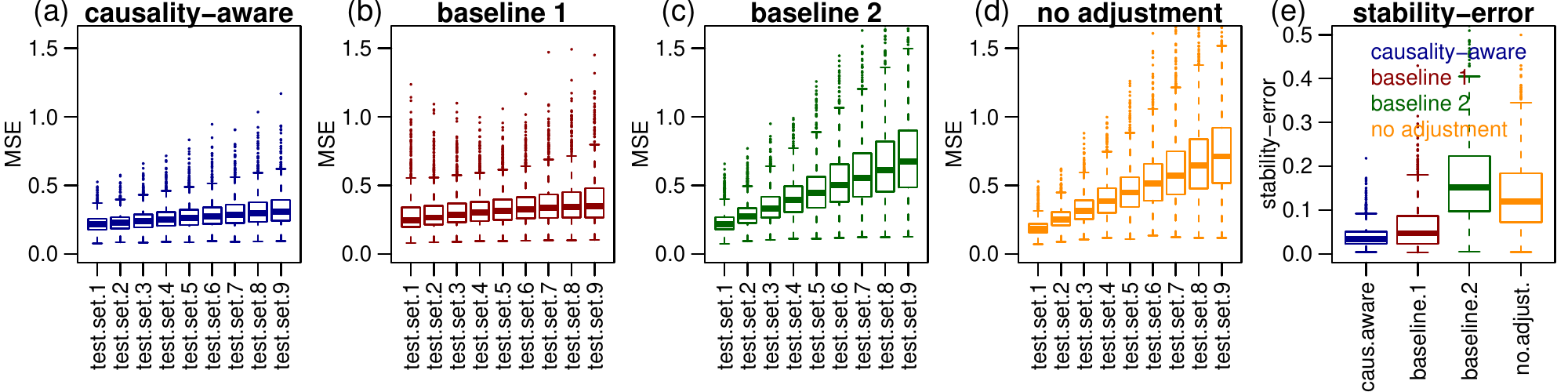}}
\vskip -0.1in
\caption{Synthetic data experiment results for the increasing $Var(Y_{ts})$ case.}
\label{fig:mpower.aucs.stability.regr.e2}
\end{figure*}

\section{Final remarks}

This paper has three main contributions. First, we describe a novel counterfactual approach to train ``causality-aware" predictive models, which leverages only the associations generated by the causal mechanisms of interest. Second, by leveraging a reparameterization of the linear structural causal models (described in Section 4.2.1), we show that the approach does not require full knowledge of the data generation process. It suffices to know which variables are confounders and mediators, without knowing how these variables are causally related. This represent an important practical advantage of the method relative to alternative approaches such as counterfactual normalization~\cite{subbaswamy2018}, which requires knowledge of the full causal graph. Third, we investigate the stability properties of the method w.r.t. dataset shifts generated by selection biases. We show that the $E[MSE]$ for adjustment approaches that fail to deconfound the test set features will be unstable w.r.t. shifts in $Cov(\bfC, Y)$, even when the ML models are trained with unconfounded data (and there are no shifts in $Var(Y_{ts})$). This is an important observation that (we feel) is not well appreciated in the ML community.

One important drawback of the approach is its reliance on the linearity assumption. The present work, however, represents a first step that, we believe, will serve as inspiration for more flexible approaches. Along these lines, in a separate contribution~\cite{achaibubneto2020c} (where we compare the causality-aware approach against the residualization confounding adjustment - an ad-hoc approach, widely used in applied fields such as neuroimaging), we describe an extension of the causality-aware approach to additive models. Furthermore, in another separate contribution~\cite{achaibubneto2020b}, we also describe how the causality-aware approach (based on linear models) can still be used to deconfound the feature representations learned by deep neural network models in classification tasks. The key idea is that by training a highly accurate DNN using softmax activation at the classification layer, we have that, by construction, the feature representation learned by the last layer prior to the output layer will fit well a logistic regression model (since the softmax activation used to classify the outputs of the DNN is essentially performing logistic regression classification). This reference illustrates the practicality of the causality-aware approach in real world applications. (Finally, while this work has focused on anticausal tasks, we present some analogous results for causal prediction tasks in Supplementary Section 11.)

\clearpage

\beginsupplement



\noindent {\Huge Supplement}

\setcounter{section}{0}
\setcounter{theorem}{0}

\section{Further related work}

As clearly articulated by~\cite{subbaswamy2020} there are, broadly speaking, two types of stable prediction approaches: (i) \textit{reactive} methods, that use data (or knowledge) from the intended deployment/target population to correct for shifts; and (ii) \textit{proactive} methods, that do not require data from the deployment/target populations, and are able to learn models that are stable with respect to unknown dataset shifts.

Many reactive approaches in the literature~\cite{shimodaira2000,sugiyama2007,dudik2006,huang2007,gretton2009,bickel2009,liu2014} deal with dataset shift by reweighting the training data to make it more closely aligned it with the target test distribution. In this paper, however, we focus on anticausal prediction tasks~\cite{scholkopf2012} and address only dataset shifts in the joint distribution of the confounders and outcome variable, $P(\bfC, Y)$, caused by selection biases~\cite{heckman1979,hernan2004,bareinboim2012}. In our particular context, we can still use simple reactive approaches when the target (test set) joint distribution, $P(\bfC_{ts}, Y_{ts})$ is known. For instance, if we know, a priori, the prevalence of a disease with respect to a given demographic risk factor in the target population, then we can either subsample or oversample the training data in order to make the training set distribution $P(\bfC_{tr}, Y_{tr})$ match the test set set distribution $P(\bfC_{ts}, Y_{ts})$. In classification tasks, simple balancing approaches, such as matching or approximate inverse probability weighting, can be used to subsample or oversample the training data. In regression tasks, approaches such as propensity scores for continuous variables~\cite{hirano2004}, covariate balancing propensity score methods for continuous variables~\cite{fong2018}, or standard propensity score matching applied to dichotomized outcome data, can be used.


The more challenging case where we face unknown shifts in $P(\bfC_{ts}, Y_{ts})$ (the case we address in this paper) requires more sophisticated adjustment approaches. Several proactive methods have been proposed in the literature. For instance, invariant learning approaches~\cite{peters2016,rojascarulla2018,magliacane2018,irm2019} employ multiple training datasets in order to learn invariant predictions. The causality-aware approach (adopted in this paper), on the other hand, only requires a single training set.

Another proactive approach, which can be applied to anticausal tasks based on a single training set, is the counterfactual normalization method proposed by~\cite{subbaswamy2018}. The approach requires full knowledge of the causal graph describing the data generation process and is implemented in several steps. First, it identifies a set vulnerable variables that make the ML model susceptible to learning unstable relationships that might lead to poor generalization across shifted dataset. Second, the approach performs a node-splitting operation in order to augment the causal graph with counterfactual variables which isolate unstable paths of statistical associations and allow the retention of some stable paths involving vulnerable variables. Third, the approach determines a stable set of input variables that can be used to train a more stable ML model. In practice, the approach is implemented with linear (or additive) models.

Similarly to counterfactual normalization, the causality-aware approach also leverages counterfactual features to improve stability and is also implemented with linear models\footnote{In reference~\cite{achaibubneto2020c} we describe how to causality-aware approach can be extended to additive models.}. There are, nonetheless, important differences. The key idea (in the context of anticausal prediction tasks) is to train and evaluate supervised ML algorithms on counterfactually simulated data which retains only the associations generated by the causal influences of the output variable on the inputs. Noteworthy, as described in the main text, it is always possible to reparameterize the model in a way that the covariance among the features and among the confounders is pushed towards the respective error terms. This allows the generation of counterfactual features without even knowing the causal relations among features and the causal relations among the confounders. As a consequence, the causality-aware approach does not require knowledge of the full data generation process (at least for linear models). Contrary to counterfactual normalization, where the full causal diagram needs to be specified, the causality-aware approach only requires knowledge of which variables are confounders.

Finally, the methods proposed by~\cite{kuang2018,kuang2020} represent another set of stable prediction approaches. The key idea behind these methods is to find a set of covariates for which the expected value of the outcome is stable across distinct test set environments. These covariates fall into two classes: stable variables ($\bfS$) that have an structural relationship with the outcome, and unstable variables ($\bfV$) that can be associated with both the outcome and the stable variables but do not have a causal relation with the outcome. Assuming that there exists a stable function $f(\bfs)$ such that for all testing environments $E(Y \mid \bfS = \bfs, \bfV = \bfv) = E(Y \mid \bfS = \bfs) = f(\bfs)$ - a condition which is fulfilled when $Y \ci \bfV \mid \bfS$ - the approach is able to learn the stable function $f(\bfs)$ without prior knowledge about which variables are stable or unstable. These methods, however, are tailored to causal prediction tasks (i.e., where the inputs have a causal effect on the outcome), and cannot be directly applied in anticausal tasks\footnote{Note that in anticausal prediction tasks $\bfS$ might be a collider. Hence, if $\bfS$ is a collider, it follows that conditional on $\bfS$, $Y$ cannot be independent of $\bfV$, and the assumption $Y \ci \bfV \mid \bfS$ cannot hold.}.

\section{Additional univariate examples}

Consider an anticausal prediction task, and suppose that our goal is to build a ML model whose predictive performance is only informed by the indirect causal effect of $Y$ on $X$. In this case, we simulate data according to the twin network in Figure \ref{fig:indirect.confounding.twin}a, so that,
\begin{align}
Cov(X^\ast, Y) &= Cov(\theta_{XM} M^\ast + U_X, Y) = \theta_{XM} \, Cov(M^\ast, Y) = \theta_{XM} \, Cov(\theta_{MY} Y + U_M, Y) \nonumber \\
&= \theta_{XM} \, \theta_{MY} \, Cov(Y, Y) = \theta_{XM} \, \theta_{MY}~.
\label{eq:anticausal.indirect}
\end{align}

Now, suppose that the goal is to build a ML model whose predictive performance is only informed by the spurious associations generated by the confounder, we can simulate data according to the twin network in Figure \ref{fig:indirect.confounding.twin}b, so that,
\begin{align}
Cov(X^\ast, Y) &= Cov(\theta_{XC} \, C + U_X, \theta_{YC} \, C + U_Y) = \theta_{XC} \, \theta_{YC} \, Cov(C, C) = \theta_{XC} \, \theta_{YC}~.
\label{eq:anticausal.confounding}
\end{align}

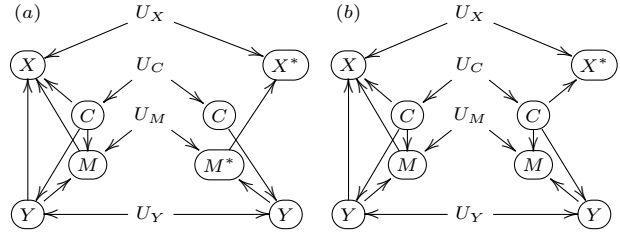
\begin{figure}[!h]
{\scriptsize
$$
\xymatrix@-1.4pc{
(a) & & U_X \ar[dll] \ar[drr] & &  & (b) & & U_X \ar[dll] \ar[drr] & & \\
*+[F-:<10pt>]{X} & & U_C \ar[dl] \ar[dr] & & *+[F-:<10pt>]{X^\ast}  & *+[F-:<10pt>]{X} & & U_C \ar[dl] \ar[dr] & & *+[F-:<10pt>]{X^\ast} \\
& *+[F-:<10pt>]{C} \ar[ul] \ar[ddl] \ar[d] & U_M \ar[dl] \ar[dr] & *+[F-:<10pt>]{C} \ar[rdd] && & *+[F-:<10pt>]{C} \ar[ul] \ar[ddl] \ar[d] & U_M \ar[dl] \ar[dr] & *+[F-:<10pt>]{C} \ar[rdd] \ar[d] \ar[ur] &\\
& *+[F-:<10pt>]{M} \ar[uul] && *+[F-:<10pt>]{M^\ast} \ar[uur] && & *+[F-:<10pt>]{M} \ar[uul] && *+[F-:<10pt>]{M} && \\
*+[F-:<10pt>]{Y} \ar[uuu] \ar[ur] & & U_Y \ar[ll] \ar[rr] & & *+[F-:<10pt>]{Y} \ar[ul]  & *+[F-:<10pt>]{Y} \ar[uuu] \ar[ur] & & U_Y \ar[ll] \ar[rr] & & *+[F-:<10pt>]{Y} \ar[ul] \\
}
$$}
\vskip -0.1in
  \caption{Twin network approach in the case where the indirect effect represents the causal effect of interest (panel a), and in the case where we are interested in estimating predictive performance that is due to confounding effects.}
  \label{fig:indirect.confounding.twin}
\end{figure}

\section{On alternative model modifications for simulating counterfactual data}

In the main text (as well as, in the above section) we showed how to generate counterfactual data that contains only associations generated by the causal effects of interest. A natural question is whether alternative modifications of the causal diagram (other than the ones presented in the main text and in Supplementary Section 2) would also lead to counterfactual datasets containing only the associations due to the causal effects of interest. Here, we show that this is sometimes possible, and clarify that, for anticausal prediction tasks, the requirement for the intervention to work is that $Y$ is not altered by the intervention.

We start with the case where the interest focus on the direct causal effects in anticausal predictive tasks. Here, the goal is to simulate counterfactual data where $Cov(X^\ast, Y^\ast) = \theta_{XY}$. Starting with examples involving confounding alone, consider first an alternative modification where we simulate data with the confounder variable $C$ set to a fixed value $c$, as described in the twin network in Figure \ref{fig:confoundingtwin2}a. Direct calculation shows that,
\begin{align*}
Cov(X^\ast, Y^\ast) &= Cov(\theta_{XC} \, c + \theta_{XY} \, Y^\ast + U_X, Y^\ast) \\
&= \theta_{XY} \, Cov(Y^\ast, Y^\ast) = \theta_{XY} \, Var(Y^\ast) \\
&= \theta_{XY} \, Var(\theta_{YC} \, c + U_Y) = \theta_{XY} \, Var(U_Y) \\
&= \theta_{XY} (1 - \theta_{YC}^2)~,
\end{align*}
for any chosen $c$ value. (Note that $Var(U_Y) = 1 - \theta_{YC}^2$ since $1 = Var(Y) = Var(\theta_{YC} \, C + U_Y) = \theta_{YC}^2 Var(C) + Var(U_Y) = \theta_{YC}^2 + Var(U_Y)$.)
\begin{figure}[!h]
{\footnotesize
$$
\xymatrix@-1.4pc{
(a) & & U_X \ar[dll] \ar[drr] & && (b) & & U_X \ar[dll] \ar[drr] & &\\
*+[F-:<10pt>]{X} & & U_C \ar[dl] & & *+[F-:<10pt>]{X^\ast} & *+[F-:<10pt>]{X} & & U_C \ar[dl] \ar[dr] & & *+[F-:<10pt>]{X^\ast} \\
& *+[F-:<10pt>]{C} \ar[ul] \ar[dl] &  & *+[F-]{c} \ar[rd] \ar[ru] && & *+[F-:<10pt>]{C} \ar[ul] \ar[dl] & & *+[F-:<10pt>]{C} \ar[ru] & \\
*+[F-:<10pt>]{Y} \ar[uu] & & U_Y \ar[ll] \ar[rr] & & *+[F-:<10pt>]{Y^\ast} \ar[uu] & *+[F-:<10pt>]{Y} \ar[uu] & & U_Y \ar[ll] \ar[rr] & & *+[F-:<10pt>]{Y^\ast} \ar[uu] \\
}
$$}
  \caption{Alternative model modifications for the confounding only examples.}
  \label{fig:confoundingtwin2}
\end{figure}
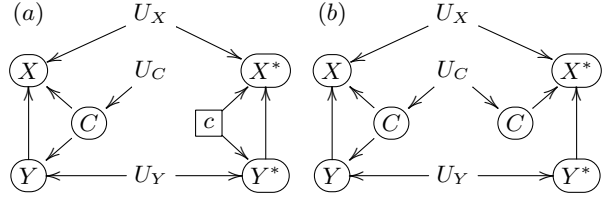
Now, consider another alternative modification where we drop the causal link $C \rightarrow Y$ (rather than $C \rightarrow X$) as shown in Figure \ref{fig:confoundingtwin2}b. Note that direct calculation of $Cov(X^\ast, Y^\ast)$ shows again that,
\begin{align*}
Cov(X^\ast, Y^\ast) &= Cov(\theta_{XY} \, Y^\ast + \theta_{XC} \, C + U_X, Y^\ast) \\
&= \theta_{XY} \, Cov(Y^\ast, Y^\ast) \\
&= \theta_{XY} \, Var(Y^\ast) = \theta_{XY} \, Var(U_Y) \\
&= \theta_{XY} (1 - \theta_{YC}^2)~.
\end{align*}
Hence, we see that for both alternative modifications presented in Figure \ref{fig:confoundingtwin2} the covariance between the response and the feature does not equal $\theta_{XY}$, the association due to the causal effect of $Y$ on $X$. (Note that in both examples the intervention altered the original variable $Y$.)

Now, we show that for the mediation only example, these alternative modifications still capture the correct covariance because, in this case, these modifications do not alter $Y$. For instance, by setting the mediator $M$ to the fixed value $m$, as described in Figure \ref{fig:mediationtwin2}a, we still have that,
\begin{align*}
Cov(X^\ast, Y) &= Cov(\theta_{XY} \, Y + \theta_{XM} \, m + U_X, Y) \\
&= \theta_{XY} \, Cov(Y, Y) + \theta_{XM} \, Cov(m, Y) + Cov(U_X, Y) \\
&= \theta_{XY} \, Var(Y) = \theta_{XY}~.
\end{align*}
Similarly, note that by dropping the causal link $Y \rightarrow M$ (rather than $M \rightarrow X$), as described in Figure \ref{fig:mediationtwin2}b, we still have that,
\begin{align*}
Cov(X^\ast, Y) &= Cov(\theta_{XY} \, Y + \theta_{XM} \, M^\ast + U_X, Y) \\
&= \theta_{XY} \, Cov(Y, Y) + \theta_{XM} \, Cov(M^\ast, Y) + Cov(U_X, Y) \\
&= \theta_{XY} \, Var(Y) = \theta_{XY}~.
\end{align*}
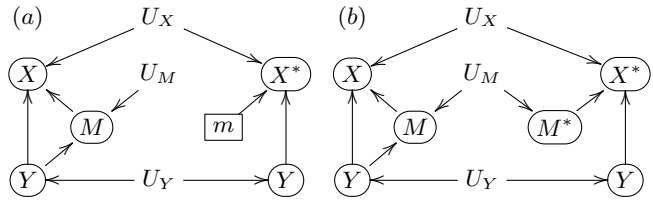
\begin{figure}[!h]
{\footnotesize
$$
\xymatrix@-1.4pc{
(a) & & U_X \ar[dll] \ar[drr] & && (b) & & U_X \ar[dll] \ar[drr] & &\\
*+[F-:<10pt>]{X} & & U_M \ar[dl] & & *+[F-:<10pt>]{X^\ast} & *+[F-:<10pt>]{X} & & U_M \ar[dl] \ar[dr] & & *+[F-:<10pt>]{X^\ast} \\
& *+[F-:<10pt>]{M} \ar[ul] &  & *+[F-]{m} \ar[ru] && & *+[F-:<10pt>]{M} \ar[ul] & & *+[F-:<10pt>]{M^\ast} \ar[ru] & \\
*+[F-:<10pt>]{Y} \ar[ru] \ar[uu] & & U_Y \ar[ll] \ar[rr] & & *+[F-:<10pt>]{Y} \ar[uu] & *+[F-:<10pt>]{Y} \ar[ru] \ar[uu] & & U_Y \ar[ll] \ar[rr] & & *+[F-:<10pt>]{Y} \ar[uu] \\
}
$$}
\vskip -0.1in
  \caption{Alternative model modifications for the for the mediation only examples.}
  \label{fig:mediationtwin2}
\end{figure}

These examples show that for the mediation problem we don't necessarily need to simulate counterfactual features by dropping $M$ from the parent set of $X$. From a practical point of view, however, it is still more advantageous to simulate counterfactual features by dropping the causal link $M \rightarrow X$ since this approach only requires the simulation of the counterfactual features, whereas the approach described in Figure \ref{fig:mediationtwin2}a requires us to set $M$ to $m$, and the approach in Figure \ref{fig:mediationtwin2}b requires the simulation of counterfactual mediator data, $M^\ast$, in addition to the simulation of counterfactual feature data, $X^\ast$.

Now, let's consider indirect causal effects in anticausal prediction tasks. Here, the goal is to simulate counterfactual data where $Cov(X^\ast, Y^\ast) = \theta_{XM} \, \theta_{MY}$. Consider first the alternative intervention where we remove the link $C \rightarrow Y$ (rather than $C \rightarrow X$, as we did in Figure 2 in the main text) in addition to removing $Y \rightarrow X$ and $C \rightarrow M$, as shown in Figure \ref{fig:indirecttwin2}a. Note that, in this case, the intervention altered $Y$ and we have that,
\begin{align*}
Cov(X^\ast, Y^\ast) &= Cov(\theta_{XC} C + \theta_{XM} M^\ast + U_X, U_Y) \\
&= \theta_{XM} \, Cov(M^\ast, U_Y) \\
&= \theta_{XM} \, Cov(\theta_{MY} Y^\ast + U_M, U_Y) \\
&= \theta_{XM} \, \theta_{MY} \, Cov(Y^\ast, U_Y) = \theta_{XM} \, \theta_{MY} \, Var(U_Y) \\
&= \theta_{XM} \, \theta_{MY} \, (1 - \theta_{YC}^2)~.
\end{align*}
Similarly, for the intervention where we set $C$ to $c$ we also alter $Y$ and we have that,
\begin{align*}
Cov(X^\ast, Y^\ast) &= Cov(\theta_{XC} c + \theta_{XM} M^\ast + U_X, Y^\ast) \\
&= \theta_{XM} \, Cov(M^\ast, Y^\ast) \\
&= \theta_{XM} \, Cov(\theta_{MY} Y^\ast + \theta_{MC} \, c + U_M, Y^\ast) \\
&= \theta_{XM} \, \theta_{MY} \, Var(Y^\ast) \\
&= \theta_{XM} \, \theta_{MY} \, Var(\theta_{YC} \, c + U_Y) \\
&= \theta_{XM} \, \theta_{MY} \, Var(U_Y) \\
&= \theta_{XM} \, \theta_{MY} \, (1 - \theta_{YC}^2)~.
\end{align*}
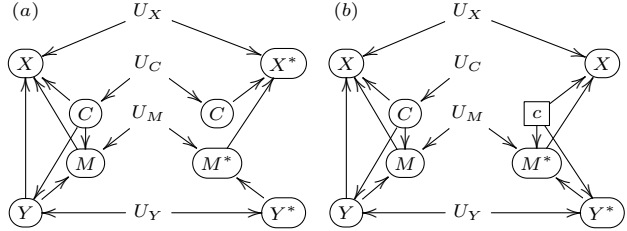
\begin{figure}[!h]
{\scriptsize
$$
\xymatrix@-1.4pc{
(a) & & U_X \ar[dll] \ar[drr] & && (b) & & U_X \ar[dll] \ar[drr] & &\\
*+[F-:<10pt>]{X} & & U_C \ar[dl] \ar[dr] & & *+[F-:<10pt>]{X^\ast} & *+[F-:<10pt>]{X} & & U_C \ar[dl] & & *+[F-:<10pt>]{X} \\
& *+[F-:<10pt>]{C} \ar[ul] \ar[ddl] \ar[d] & U_M \ar[dl] \ar[dr] & *+[F-:<10pt>]{C} \ar[ru] && & *+[F-:<10pt>]{C} \ar[ul] \ar[ddl] \ar[d] & U_M \ar[dl] \ar[dr] & *+[F]{c} \ar[ru] \ar[d] \ar[ddr] & \\
& *+[F-:<10pt>]{M} \ar[uul] && *+[F-:<10pt>]{M^\ast} \ar[uur] &&& *+[F-:<10pt>]{M} \ar[uul] && *+[F-:<10pt>]{M^\ast} \ar[uur] & \\
*+[F-:<10pt>]{Y} \ar[uuu] \ar[ur] & & U_Y \ar[ll] \ar[rr] & & *+[F-:<10pt>]{Y^\ast} \ar[ul] & *+[F-:<10pt>]{Y} \ar[uuu] \ar[ur] & & U_Y \ar[ll] \ar[rr] & & *+[F-:<10pt>]{Y^\ast} \ar[ul] \\
}
$$}
\vskip -0.2in
  \caption{Twin network approach in the case where the indirect effect represents the causal effect of interest.}
  \label{fig:indirecttwin2}
\end{figure}

These examples once again illustrate that we are unable to recover the associations generated by the indirect effects (namely, $\theta_{XM} \, \theta_{MY}$) when we alter $Y$ in anticausal tasks.

\section{Node-splitting transformations as alternative interventions}

In this section we show that the adoption of node-splitting transformations~\cite{richardson2013} encoded in single world intervention graphs (SWIGs) can also be used as an alternative intervention for the generation of counterfactual data that contains only the associations generated by the causal mechanisms of interest. Here, we present SWIGs that capture exactly the same marginal associations between the counterfactual features and responses, as the twin-networks presented in Figure 2 in the main text, and in Supplementary Figures \ref{fig:indirect.confounding.twin}a and b.

Figure \ref{fig:anticausal.node.splitting} presents the SWIGs for the generation of counterfactual features in the anticausal prediction tasks. Here, a node-split operation associated with the intervention $do(Z = z)$ is represented by splitting the node $\xymatrix{*+[F-:<10pt>]{Z}}$ into two elements: $\xymatrix{*+[F]{z}}$ representing the instantiation of $Z$ to the fixed value $z$; and $\xymatrix{*+[F-:<10pt>]{Z}}$ representing the random variable $Z$.

\begin{figure}[!h]
{\scriptsize
$$
\xymatrix@-1.1pc{
(a) & *+[F]{c} \ar[dl] &  *+[F-:<10pt>]{C} \ar[dd] \ar[dr] & & (b) & *+[F]{c} \ar[dl] \ar[ddr] & *+[F-:<10pt>]{C} \ar[drr] & & & (c) & & *+[F-:<10pt>]{C} \ar[dd] \ar[drr] \ar[dll] & & \\
*+[F-:<10pt>]{X_{c,m,Y}} & & & *+[F-:<10pt>]{Y} \ar[lll] \ar[dl] & *+[F-:<10pt>]{X_{y,c,{M_{c,Y}}}} & & & *+[F]{y} \ar[lll]  & *+[F-:<10pt>]{Y} \ar[dll] & *+[F-:<10pt>]{X_{y,m,C}} & & & *+[F]{y} \ar[lll]  & *+[F-:<10pt>]{Y} \ar[dll] \\
& *+[F]{m} \ar[ul] & *+[F-:<10pt>]{M} & & & & *+[F-:<10pt>]{M_{c,Y}} \ar[ull] & & & & *+[F]{m} \ar[ul] & *+[F-:<10pt>]{M} & & \\
}
$$}
\vskip -0.1in
  \caption{SWIGs for the anticausal predictive tasks.}
  \label{fig:anticausal.node.splitting}
\end{figure}
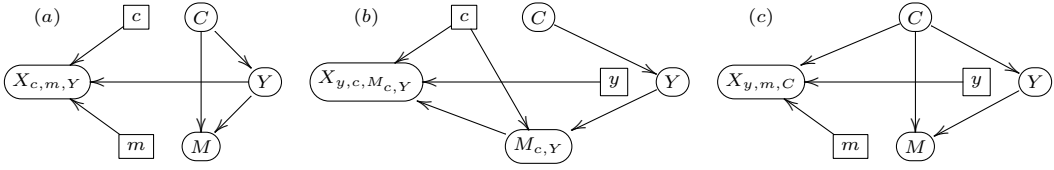

In Figure \ref{fig:anticausal.node.splitting}a we split the $C$ and $M$ nodes in order to obtain a counterfactual feature $X_{c,m,Y}$, whose association with $Y$ is generated by the direct causal effect $\theta_{XY}$, since for any fixed values of $c$ and $m$ we have that,
\begin{align*}
Cov(X_{c,m,Y}, Y) &= Cov(\theta_{XC} \, c + \theta_{XM} \, m + \theta_{XY} \, Y + U_X, Y) \\
&= \theta_{XY} \, Cov(Y, Y) = \theta_{XY}~.
\end{align*}
In Figure \ref{fig:anticausal.node.splitting}b we split the $C$ and $Y$ nodes in order to obtain a counterfactual feature $X_{y,c,{M_{c,Y}}}$, whose association with $Y$ is generated by the indirect causal effect $\theta_{XM} \, \theta_{MY}$, since for any fixed values of $c$ and $y$ we have that,
\begin{align*}
Cov(X_{y,c,{M_{c,Y}}}, Y) &= Cov(\theta_{XC} \, c + \theta_{XY} \, y + \theta_{XM} \, M_{c,Y} + U_X, Y) \\
&= \theta_{XM} \, Cov(M_{c,Y}, Y) \\
&= \theta_{XM} \, Cov(\theta_{MC} \, c + \theta_{MY} \, Y + U_M, Y) \\
&= \theta_{XM} \, \theta_{MY} \, Cov(Y, Y) = \theta_{XM} \, \theta_{MY}~.
\end{align*}
Finally, in Figure \ref{fig:anticausal.node.splitting}c we split the $Y$ and $M$ nodes in order to obtain a counterfactual feature $X_{y,m,C}$, whose association with $Y$, measured by $\theta_{XC} \, \theta_{YC}$, is generated by the confounder $C$. Note that for any fixed values of $y$ and $m$ we have that,
\begin{align*}
Cov(X_{y,m,C}, Y) &= Cov(\theta_{XC} \, C + \theta_{XM} \, m + \theta_{XY} \, y + U_X, Y) \\
&= \theta_{XC} \, Cov(C, Y) \\
&= \theta_{XC} \, Cov(C, \theta_{YC} \, C + U_Y) \\
&= \theta_{XC} \, \theta_{YC} \, Cov(C, C) = \theta_{XC} \, \theta_{YC}~.
\end{align*}

Note that in the SWIG framework, even when we split the $Y$ node into $\xymatrix{*+[F]{y}}$ and $\xymatrix{*+[F-:<10pt>]{Y}}$ in anticausal prediction tasks (e.g., Figure \ref{fig:anticausal.node.splitting}b and c), we have that the component $\xymatrix{*+[F-:<10pt>]{Y}}$ still represents the un-altered random variable $Y$. (This observation is again consistent with the point made in the previous section that for anticausal prediction tasks, the requirement for the intervention to work is that $Y$ is not altered by the intervention.)

\section{Anticausal reparameterization example}

Here, we present an illustrative example of the reparameterization for the anticausal prediction task. The goal is to provide a concrete example to help out readers that are not familiar with the notation used in the linear structural equations models. Figure \ref{fig:anticausal.example}a presents an illustrative example of the actual data generation process, whereas Figure \ref{fig:anticausal.example.reparameterized} represents the reparameterized model.


\begin{figure}[!h]
{\scriptsize
$$
\xymatrix@-1.0pc{
& (a) & & U_{C_1} \ar[r] & *+[F-:<10pt>]{C_1} \ar[rd] &&& (b) & *+[F-:<10pt>]{\bfC} \ar[dl] \ar[dr] \ar[dd] & & (\bfX) & (\bfC) & (\bfM) \\
& & U_{C_2} \ar[r] & *+[F-:<10pt>]{C_2} \ar[rr] \ar[dl] \ar[ru] & & *+[F-:<10pt>]{C_3} \ar[ddr] \ar[dddd] & U_{C_3} \ar[l] & *+[F-:<10pt>]{\bfX} & & *+[F-:<10pt>]{Y} \ar[ll] \ar[dl] & *+[F-:<10pt>]{X_1} \ar[d] \ar@/^1.0pc/[dd] & *+[F-:<10pt>]{C_1} \ar@/^1.0pc/[dd] & *+[F-:<10pt>]{M_1} \\
& U_{X_1} \ar[r] & *+[F-:<10pt>]{X_1} \ar[dl] \ar[dd] &&&&&& *+[F-:<10pt>]{\bfM} \ar[ul] && *+[F-:<10pt>]{X_2} \ar[d] & *+[F-:<10pt>]{C_2} \ar[u] \ar[d] & *+[F-:<10pt>]{M_2} \ar[u] \\
U_{X_2} \ar[r] & *+[F-:<10pt>]{X_2} \ar[dr]_{\theta_{{X_3}{X_2}}} & & & & & *+[F-:<10pt>]{Y} \ar[lllll]_{\theta_{{X_2}Y}} \ar[dllll]^{\theta_{{X_3}Y}} \ar[ddl] & U_{Y} \ar[l] &&& *+[F-:<10pt>]{X_3} & *+[F-:<10pt>]{C_3} \\
& & *+[F-:<10pt>]{X_3} &&&&&&&& (c) & (d) & (e) \\
& U_{X_3} \ar[ru] & U_{M_1} \ar[r] & *+[F-:<10pt>]{M_1} \ar[lu] & & *+[F-:<10pt>]{M_2} \ar[ll] \ar@/^1pc/[uuulll] & U_{M_2} \ar[l] & \\
}
$$}
\vskip -0.1in
  \caption{Original anticausal prediction task example. Panel a shows the actual data generation process. Panel b shows the multivariate representation of the DAG in panel a. Panels c, d, and e show, respectively, the DAG subdiagrams represented by the $\bfX$, $\bfC$, and $\bfM$ nodes in panel b.}
  \label{fig:anticausal.example}
  \vskip -0.1in
\end{figure}
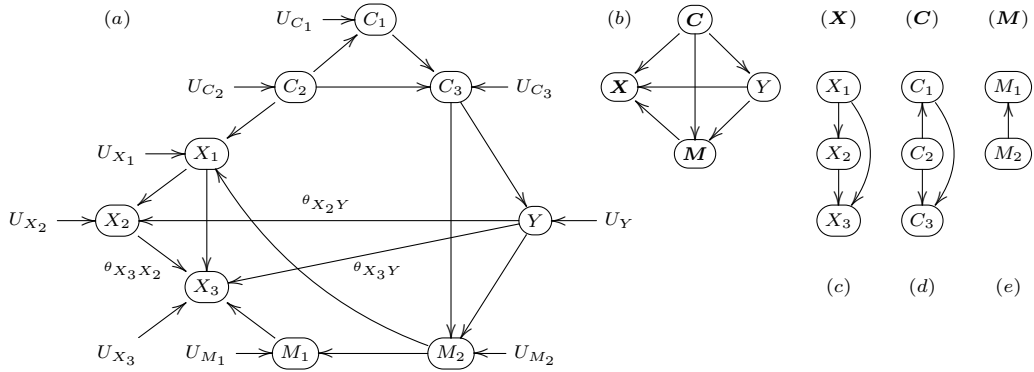

\begin{figure}[!h]
{\scriptsize
$$
\xymatrix@-1.2pc{
&&& U_{C_1} \ar[d] \ar[drr] & U_{C_2} \ar[d] \ar[dl] \ar[dr] & U_{C_3} \ar[d] && \\
&&& W_{C_1} \ar[d] & W_{C_2} \ar[d] & W_{C_3} \ar[d] && \\
& & & *+[F-:<10pt>]{C_1} & *+[F-:<10pt>]{C_2} \ar[ddll] \ar[dddll] \ar[ddddll] & *+[F-:<10pt>]{C_3} \ar[dddr] \ar[dddddd] \ar@/_1.0pc/[ddddddll] & \\
&&&&&& U_Y \ar[d] \\
U_{X_1} \ar[r] \ar[dr] \ar[ddr] & W_{X_1} \ar[r] & *+[F-:<10pt>]{X_1} &&&&  W_Y \ar[d] \\
U_{X_2} \ar[r] \ar[dr] & W_{X_2} \ar[r] & *+[F-:<10pt>]{X_2} & & & & *+[F-:<10pt>]{Y} \ar[llll]|-{\gamma_{{X_2}Y}} \ar[dllll]|-{\gamma_{{X_3}Y}} \ar[dddl] \ar[dddlll] \\
U_{X_3} \ar[r] & W_{X_3} \ar[r] & *+[F-:<10pt>]{X_3} &&&& \\
&&&&& \\
&& & *+[F-:<10pt>]{M_1} \ar[luu] & & *+[F-:<10pt>]{M_2} \ar@/^0.5pc/[uuuulll] \ar@/^0.5pc/[uuulll] \ar[uulll] & & \\
&&& W_{M_1} \ar[u] && W_{M_2} \ar[u] && \\
&&& U_{M_1} \ar[u] && U_{M_2} \ar[u] \ar[ull] && \\
}
$$}
\vskip -0.1in
  \caption{Reparameterized model for the anticausal prediction task example in Figure \ref{fig:anticausal.example}a.}
  \label{fig:anticausal.example.reparameterized}
\end{figure}

For the anticausal prediction task DAG in Figure \ref{fig:anticausal.example}, we have that the structural equations,
\begin{align*}
\bfC &= \bfTheta_{CC} \, \bfC + \bfU_C~, \\
Y &= \bfTheta_{YC} \, \bfC + U_Y~, \\
\bfM &= \bfTheta_{MM} \, \bfM + \bfTheta_{MC} \, \bfC + \bfTheta_{MY} \, Y + \bfU_M~, \\
\bfX &= \bfTheta_{XX} \, \bfX + \bfTheta_{XC} \, \bfC + \bfTheta_{XM} \, \bfM + \bfTheta_{XY} \, Y + \bfU_X~,
\end{align*}
are explicitly given by,
\begin{equation*}
\underbrace{
\begin{pmatrix}
C_1 \\
C_2 \\
C_3 \\
\end{pmatrix}}_{\bfC}
=
\underbrace{
\begin{pmatrix}
0 & \theta_{{C_1}{C_2}} & 0 \\
0 & 0 & 0 \\
\theta_{{C_3}{C_1}} & \theta_{{C_3}{C_2}} & 0 \\
\end{pmatrix}}_{\bfTheta_{CC}}
\underbrace{
\begin{pmatrix}
C_1 \\
C_2 \\
C_3 \\
\end{pmatrix}}_{\bfC}
+
\underbrace{
\begin{pmatrix}
U_{C_1} \\
U_{C_2} \\
U_{C_3} \\
\end{pmatrix}}_{\bfU_C}~,
\end{equation*}

\begin{equation*}
Y =
\underbrace{
\begin{pmatrix}
0 & 0 & \theta_{{Y}{C_3}} \\
\end{pmatrix}}_{\bfTheta_{YC}}
\underbrace{
\begin{pmatrix}
C_1 \\
C_2 \\
C_3 \\
\end{pmatrix}}_{\bfC}
+ \, U_Y~,
\end{equation*}

\begin{equation*}
\underbrace{
\begin{pmatrix}
M_1 \\
M_2 \\
\end{pmatrix}}_{\bfM}
=
\underbrace{
\begin{pmatrix}
0 & \theta_{{M_1}{M_2}} \\
0 & 0 \\
\end{pmatrix}}_{\bfTheta_{MM}}
\underbrace{
\begin{pmatrix}
M_1 \\
M_2 \\
\end{pmatrix}}_{\bfM} +
\underbrace{
\begin{pmatrix}
0 & 0 & 0 \\
0 & 0 & \theta_{{M_2}{C_3}} \\
\end{pmatrix}}_{\bfTheta_{MC}}
\underbrace{
\begin{pmatrix}
C_1 \\
C_2 \\
C_3 \\
\end{pmatrix}}_{\bfC} +
\underbrace{
\begin{pmatrix}
0 \\
\theta_{{M_2}{Y}} \\
\end{pmatrix}}_{\bfTheta_{MY}}
Y
+
\underbrace{
\begin{pmatrix}
U_{M_1} \\
U_{M_2} \\
\end{pmatrix}}_{\bfU_M}~,
\end{equation*}

\begin{align*}
\underbrace{
\begin{pmatrix}
X_1 \\
X_2 \\
X_3 \\
\end{pmatrix}}_{\bfX}
=&
\underbrace{
\begin{pmatrix}
0 & 0 & 0 \\
\theta_{{X_2}{X_1}} & 0 & 0 \\
\theta_{{X_3}{X_1}} & \theta_{{X_3}{X_2}} & 0 \\
\end{pmatrix}}_{\bfTheta_{XX}}
\underbrace{
\begin{pmatrix}
X_1 \\
X_2 \\
X_3 \\
\end{pmatrix}}_{\bfX} +
\underbrace{
\begin{pmatrix}
0 & \theta_{{X_1}{C_2}} & 0 \\
0 & 0 & 0 \\
0 & 0 & 0 \\
\end{pmatrix}}_{\bfTheta_{XC}}
\underbrace{
\begin{pmatrix}
C_1 \\
C_2 \\
C_3 \\
\end{pmatrix}}_{\bfC} + \\
&+
\underbrace{
\begin{pmatrix}
0 & \theta_{{X_1}{M_2}} \\
0 & 0 \\
\theta_{{X_3}{M_1}} & 0 \\
\end{pmatrix}}_{\bfTheta_{XM}}
\underbrace{
\begin{pmatrix}
M_1 \\
M_2 \\
\end{pmatrix}}_{\bfM} +
\underbrace{
\begin{pmatrix}
0 \\
\theta_{{X_2}{Y}} \\
\theta_{{X_3}{Y}} \\
\end{pmatrix}}_{\bfTheta_{XY}}
Y
+
\underbrace{
\begin{pmatrix}
U_{X_1} \\
U_{X_2} \\
U_{X_3} \\
\end{pmatrix}}_{\bfU_X}~.
\end{align*}

Using simple algebraic manipulations, we can re-write the above linear structural models as,
\begin{align*}
\bfC &= \bfW_C~, \\
Y &= \bfGamma_{YC} \, \bfC + W_Y~, \\
\bfM &= \bfGamma_{MC} \, \bfC + \bfGamma_{MY} \, Y + \bfW_M~, \\
\bfX &= \bfGamma_{XC} \, \bfC + \bfGamma_{XM} \, \bfM + \bfGamma_{XY} \, Y + \bfW_X~,
\end{align*}
where,
\begin{align*}
\bfW_C = (\bfI - \bfTheta_{CC})^{-1} \bfU_C~, \; W_Y = U_Y~, \; \bfW_M = (\bfI - \bfTheta_{MM})^{-1} \bfU_M~, \; \bfW_X = (\bfI - \bfTheta_{XX})^{-1} \bfU_X~,
\end{align*}
and,
\begin{align*}
\bfGamma_{YC} &= \bfTheta_{YC}~, \;\; \bfGamma_{MC} = (\bfI - \bfTheta_{MM})^{-1} \bfTheta_{MC}~, \;\; \bfGamma_{MY} = (\bfI - \bfTheta_{MM})^{-1} \bfTheta_{MY}~, \\
\bfGamma_{XC} &= (\bfI - \bfTheta_{XX})^{-1} \bfTheta_{XC}~, \;\; \bfGamma_{XM} = (\bfI - \bfTheta_{XX})^{-1} \bfTheta_{XM}~, \;\; \bfGamma_{XY} = (\bfI - \bfTheta_{XX})^{-1} \bfTheta_{XY}~.
\end{align*}

Next, the present the explicit form of parameters and error terms for the particular example in Figure \ref{fig:anticausal.example}. Starting with model $\bfC = \bfW_C$, we have that,
\begin{equation*}
(\bfI - \bfTheta_{CC})^{-1} =
\begin{pmatrix}
1 & \theta_{{C_1}{C_2}} & 0 \\
0 & 1 & 0 \\
\theta_{{C_3}{C_1}} & \theta_{{C_3}{C_2}} + \theta_{{C_3}{C_1}} \, \theta_{{C_1}{C_2}} & 1 \\
\end{pmatrix}~,
\end{equation*}
so that,
\begin{align*}
\bfW_C &= (\bfI - \bfTheta_{CC})^{-1} \bfU_{C} \\
&=
\begin{pmatrix}
1 & \theta_{{C_1}{C_2}} & 0 \\
0 & 1 & 0 \\
\theta_{{C_3}{C_1}} & \theta_{{C_3}{C_2}} + \theta_{{C_3}{C_1}} \, \theta_{{C_1}{C_2}} & 1 \\
\end{pmatrix}
\begin{pmatrix}
U_{C_1} \\
U_{C_2} \\
U_{C_3} \\
\end{pmatrix} \\
&=
\begin{pmatrix}
U_{C_1} + \theta_{{C_1}{C_2}} \, U_{C_2} \\
U_{C_2} \\
U_{C_3} + U_{C_2} (\theta_{{C_3}{C_2}} + \theta_{{C_3}{C_1}} \, \theta_{{C_1}{C_2}}) + U_{C_1} \, \theta_{{C_3}{C_1}} \\
\end{pmatrix} =
\begin{pmatrix}
W_{{C_1}} \\
W_{{C_2}} \\
W_{{C_3}} \\
\end{pmatrix}~,
\end{align*}

For the model $Y = \bfGamma_{YC} \, \bfC + W_Y$, we have that,
\begin{align*}
\bfGamma_{YC} &= \bfTheta_{YC} \\
&=
\begin{pmatrix}
0 & 0 & \theta_{{Y}{C_3}} \\
\end{pmatrix} \\
&=
\begin{pmatrix}
\gamma_{{Y}{C_1}} & \gamma_{{Y}{C_2}} & \gamma_{{Y}{C_3}} \\
\end{pmatrix}~, \\
W_Y &= U_Y~.
\end{align*}

For the model $\bfM = \bfGamma_{MC} \, \bfC + \bfGamma_{MY} \, Y + \bfW_M$, we have that,
\begin{equation*}
(\bfI - \bfTheta_{MM})^{-1} =
\begin{pmatrix}
1 & \theta_{{M_1}{M_2}}\\
0 & 1 \\
\end{pmatrix}~,
\end{equation*}
so that,
\begin{align*}
\bfGamma_{MC} &= (\bfI - \bfTheta_{MM})^{-1} \bfTheta_{MC} \\
&=
\begin{pmatrix}
1 & \theta_{{M_1}{M_2}}\\
0 & 1 \\
\end{pmatrix}
\begin{pmatrix}
0 & 0 & 0 \\
0 & 0 & \theta_{{M_2}{C_3}} \\
\end{pmatrix} \\
&=
\begin{pmatrix}
0 & 0 & \theta_{{M_1}{M_2}} \, \theta_{{M_2}{C_3}} \\
0 & 0 & \theta_{{M_2}{C_3}} \\
\end{pmatrix} =
\begin{pmatrix}
\gamma_{{M_1}{C_1}} & \gamma_{{M_1}{C_2}} & \gamma_{{M_1}{C_3}} \\
\gamma_{{M_2}{C_1}} & \gamma_{{M_2}{C_2}} & \gamma_{{M_2}{C_3}} \\
\end{pmatrix}~,
\end{align*}
and,
\begin{align*}
\bfGamma_{MY} &= (\bfI - \bfTheta_{MM})^{-1} \bfTheta_{MY} \\
&=
\begin{pmatrix}
1 & \theta_{{M_1}{M_2}}\\
0 & 1 \\
\end{pmatrix}
\begin{pmatrix}
0 \\
\theta_{{M_2}{Y}} \\
\end{pmatrix} =
\begin{pmatrix}
\theta_{{M_1}{M_2}} \, \theta_{{M_2}{Y}} \\
\theta_{{M_2}{Y}} \\
\end{pmatrix}
=
\begin{pmatrix}
\gamma_{{M_1}{Y}} \\
\gamma_{{M_2}{Y}} \\
\end{pmatrix}~,
\end{align*}
and,
\begin{align*}
\bfW_{M} &= (\bfI - \bfTheta_{MM})^{-1} \bfU_{M} \\
&=
\begin{pmatrix}
1 & \theta_{{M_1}{M_2}}\\
0 & 1 \\
\end{pmatrix}
\begin{pmatrix}
U_{M_1} \\
U_{M_2} \\
\end{pmatrix} =
\begin{pmatrix}
U_{M_1} + \theta_{{M_1}{M_2}} \, U_{{M_2}} \\
U_{{M_2}} \\
\end{pmatrix}
=
\begin{pmatrix}
W_{{M_1}} \\
W_{{M_2}} \\
\end{pmatrix}~.
\end{align*}

Finally, for the model $\bfX = \bfGamma_{XC} \, \bfC + \bfGamma_{XM} \, \bfM + \bfGamma_{XY} \, Y + \bfW_X$, we have that,
\begin{equation*}
(\bfI - \bfTheta_{XX})^{-1} =
\begin{pmatrix}
1 & 0 & 0 \\
\theta_{{X_2}{X_1}} & 1 & 0 \\
\theta_{{X_3}{X_1}} + \theta_{{X_2}{X_1}} \theta_{{X_3}{X_2}} & \theta_{{X_3}{X_2}} & 1 \\
\end{pmatrix}~,
\end{equation*}
so that,
\begin{align*}
\bfGamma_{XC} &= (\bfI - \bfTheta_{XX})^{-1} \bfTheta_{XC} \\
&=
\begin{pmatrix}
1 & 0 & 0 \\
\theta_{{X_2}{X_1}} & 1 & 0 \\
\theta_{{X_3}{X_1}} + \theta_{{X_2}{X_1}} \, \theta_{{X_3}{X_2}} & \theta_{{X_3}{X_2}} & 1 \\
\end{pmatrix}
\begin{pmatrix}
0 & \theta_{{X_1}{C_2}} & 0 \\
0 & 0 & 0 \\
0 & 0 & 0 \\
\end{pmatrix} \\
&=
\begin{pmatrix}
0 & \theta_{{X_1}{C_2}} & 0 \\
0 & \theta_{{X_1}{C_2}} \, \theta_{{X_2}{X_1}} & 0 \\
0 & \theta_{{X_1}{C_2}} \, \theta_{{X_3}{X_1}} + \theta_{{X_1}{C_2}} \, \theta_{{X_2}{X_1}} \, \theta_{{X_3}{X_2}} & 0 \\
\end{pmatrix} =
\begin{pmatrix}
\gamma_{{X_1}{C_1}} & \gamma_{{X_1}{C_2}} & \gamma_{{X_1}{C_3}}  \\
\gamma_{{X_2}{C_1}} & \gamma_{{X_2}{C_2}} & \gamma_{{X_2}{C_3}}  \\
\gamma_{{X_3}{C_1}} & \gamma_{{X_3}{C_2}} & \gamma_{{X_3}{C_3}}  \\
\end{pmatrix}~,
\end{align*}
and,
\begin{align*}
\bfGamma_{XM} &= (\bfI - \bfTheta_{XX})^{-1} \bfTheta_{XM} \\
&=
\begin{pmatrix}
1 & 0 & 0 \\
\theta_{{X_2}{X_1}} & 1 & 0 \\
\theta_{{X_3}{X_1}} + \theta_{{X_2}{X_1}} \, \theta_{{X_3}{X_2}} & \theta_{{X_3}{X_2}} & 1 \\
\end{pmatrix}
\begin{pmatrix}
0 & \theta_{{X_1}{M_2}} \\
0 & 0 \\
\theta_{{X_3}{M_1}} & 0 \\
\end{pmatrix} \\
&=
\begin{pmatrix}
0 & \theta_{{X_1}{M_2}} \\
0 & \theta_{{X_1}{M_2}} \, \theta_{{X_2}{X_1}} \\
\theta_{{X_3}{M_1}} & \theta_{{X_1}{M_2}} \, \theta_{{X_3}{X_1}} + \theta_{{X_1}{M_2}} \, \theta_{{X_2}{X_1}} \, \theta_{{X_3}{X_2}} \\
\end{pmatrix} =
\begin{pmatrix}
\gamma_{{X_1}{M_1}} & \gamma_{{X_1}{M_2}} \\
\gamma_{{X_2}{M_1}} & \gamma_{{X_2}{M_2}} \\
\gamma_{{X_3}{M_1}} & \gamma_{{X_3}{M_2}} \\
\end{pmatrix}~,
\end{align*}
and,
\begin{align*}
\bfGamma_{XY} &= (\bfI - \bfTheta_{XX})^{-1} \bfTheta_{XY} \\
&=
\begin{pmatrix}
1 & 0 & 0 \\
\theta_{{X_2}{X_1}} & 1 & 0 \\
\theta_{{X_3}{X_1}} + \theta_{{X_2}{X_1}} \theta_{{X_3}{X_2}} & \theta_{{X_3}{X_2}} & 1 \\
\end{pmatrix}
\begin{pmatrix}
0 \\
\theta_{{X_2}{Y}} \\
\theta_{{X_3}{Y}} \\
\end{pmatrix} \\
&=
\begin{pmatrix}
0 \\
\theta_{{X_2}{Y}} \\
\theta_{{X_3}{Y}} + \theta_{{X_3}{X_2}} \, \theta_{{X_2}{Y}} \\
\end{pmatrix}
=
\begin{pmatrix}
\gamma_{{X_1}{Y}} \\
\gamma_{{X_2}{Y}} \\
\gamma_{{X_3}{Y}} \\
\end{pmatrix}~,
\end{align*}
and,
\begin{align*}
\bfW_X &= (\bfI - \bfTheta_{XX})^{-1} \bfU_{X} \\
&=
\begin{pmatrix}
1 & 0 & 0 \\
\theta_{{X_2}{X_1}} & 1 & 0 \\
\theta_{{X_3}{X_1}} + \theta_{{X_2}{X_1}} \theta_{{X_3}{X_2}} & \theta_{{X_3}{X_2}} & 1 \\
\end{pmatrix}
\begin{pmatrix}
U_{X_1} \\
U_{X_2} \\
U_{X_3} \\
\end{pmatrix} \\
&=
\begin{pmatrix}
U_{X_1} \\
U_{X_1} \theta_{{X_2}{X_1}} + U_{X_2} \\
U_{X_1}(\theta_{{X_3}{X_1}} + \theta_{{X_2}{X_1}} \theta_{{X_3}{X_2}}) + U_{X_2} \theta_{{X_3}{X_2}} + U_{X_3} \\
\end{pmatrix} =
\begin{pmatrix}
W_{{U_1}} \\
W_{{U_2}} \\
W_{{U_3}} \\
\end{pmatrix}~.
\end{align*}


Table \ref{tab:table1} compiles all the elements of $\bfGamma_{YC}$, $\bfGamma_{MC}$, $\bfGamma_{MY}$, $\bfGamma_{XC}$, $\bfGamma_{XM}$, and $\bfGamma_{XY}$. It presents the causal effects in the reparameterized model (represented by the $\gamma$s) in terms of the original causal effects (represented by the $\theta$s). Note that the arrows in Figure \ref{fig:anticausal.example.reparameterized} correspond to the non-zero $\gamma$ causal effects in Table \ref{tab:table1}.
\begin{table}[!h]
\begin{center}
\begin{tabular}{c|l}
\hline
$\gamma \, \in$ & $\gamma_{{Y}{C_3}} = \theta_{{Y}{C_3}}$ \\
$\bfGamma_{YC}$ & $\gamma_{{Y}{C_1}} = \gamma_{{Y}{C_2}} = 0$ \\
\hline
$\gamma \, \in$ & $\gamma_{{M_1}{C_3}} = \theta_{{M_1}{M_2}} \, \theta_{{M_2}{C_3}}$ \\
$\bfGamma_{MC}$ & $\gamma_{{M_2}{C_3}} = \theta_{{M_2}{C_3}}$ \\
& $\gamma_{{M_1}{C_1}} = \gamma_{{M_1}{C_2}} = \gamma_{{M_2}{C_1}} = \gamma_{{M_2}{C_2}} = 0$ \\
\hline
$\gamma \, \in$ & $\gamma_{{M_1}Y} = \theta_{{M_1}{M_2}} \, \theta_{{M_2}{Y}}$ \\
$\bfGamma_{MY}$ & $\gamma_{{M_2}Y} = \theta_{{M_2}{Y}}$ \\
\hline
$\gamma \, \in$ & $\gamma_{{X_1}{C_2}} = \theta_{{X_1}{C_2}}$ \\
$\bfGamma_{XC}$ & $\gamma_{{X_2}{C_2}} = \theta_{{X_2}{X_1}} \, \theta_{{X_1}{C_2}}$ \\
& $\gamma_{{X_3}{C_2}} = \theta_{{X_1}{C_2}} \, \theta_{{X_3}{X_1}} + \theta_{{X_1}{C_2}} \, \theta_{{X_2}{X_1}} \, \theta_{{X_3}{X_2}}$ \\
& $\gamma_{{X_1}{C_1}} = \gamma_{{X_1}{C_3}} = \gamma_{{X_2}{C_1}} = \gamma_{{X_2}{C_3}} = $ \\
& $= \gamma_{{X_3}{C_1}} = \gamma_{{X_3}{C_3}} = 0$ \\
\hline
$\gamma \, \in$ & $\gamma_{{X_3}{M_1}} = \theta_{{X_3}{M_1}}$ \\
$\bfGamma_{XM}$ & $\gamma_{{X_1}{M_2}} = \theta_{{X_1}{M_2}}$ \\
& $\gamma_{{X_2}{M_2}} = \theta_{{X_1}{M_2}} \, \theta_{{X_2}{X_1}}$ \\
& $\gamma_{{X_3}{M_2}} = \theta_{{X_1}{M_2}} \, \theta_{{X_3}{X_1}} + \theta_{{X_1}{M_2}} \, \theta_{{X_2}{X_1}} \, \theta_{{X_3}{X_2}}$ \\
& $\gamma_{{X_1}{M_1}} = \gamma_{{X_2}{M_1}} = 0$ \\
\hline
$\gamma \, \in$ & $\gamma_{{X_2}Y} = \theta_{{X_2}{Y}}$ \\
$\bfGamma_{XY}$ & $\gamma_{{X_3}Y} = \theta_{{X_3}{Y}} + \theta_{{X_3}{X_2}} \, \theta_{{X_2}{Y}}$ \\
& $\gamma_{{X_1}Y} = 0$ \\
\hline
\end{tabular}
\end{center}
\caption{Causal effects in the reparameterized model.}
\label{tab:table1}
\vskip -0.1in
\end{table}

\subsection{On the invertibility of $(\bfI - \bfTheta_{VV})$}

Here, it is important to point out that for any arbitrary DAG, we have that the matrix $(\bfI - \bfTheta_{VV})$ is always invertible. To see why this is the case, note that for any arbitrary DAG we can always rearrange the order of the variables so that $\bfTheta_{VV}$ is a lower triangular matrix. For instance, we can rename and rearrange the order of the variables in the DAG $M_a$ in (\ref{fig:order.rearrangement}) as $V'_4 = V_1$, $V'_1 = V_2$, $V'_3 = V_3$, and $V'_2 = V_4$, in order to obtain the rearranged DAG $M_b$ in (\ref{fig:order.rearrangement}).
\begin{equation}
\xymatrix@-0pc{
(M_a) & *+[F-:<10pt>]{V'_4} \ar[dd]|-{\theta_{{V'_3}{V'_4}}} \ar[rr]|-{\theta_{{V'_1}{V'_4}}} \ar[ddrr]|-{\theta_{{V'_2}{V'_4}}} && *+[F-:<10pt>]{V'_1} \ar[dd]|-{\theta_{{V'_2}{V'_1}}} && (M_b) & *+[F-:<10pt>]{V_1} \ar[dd]|-{\theta_{{V_3}{V_1}}} \ar[rr]|-{\theta_{{V_2}{V_1}}} \ar[ddrr]|-{\theta_{{V_4}{V_1}}} && *+[F-:<10pt>]{V_2} \ar[dd]|-{\theta_{{V_4}{V_2}}} \\
&&& && &&& \\
& *+[F-:<10pt>]{V'_3} \ar[rr]|-{\theta_{{V'_2}{V'_3}}} && *+[F-:<10pt>]{V'_2} & & & *+[F-:<10pt>]{V_3} \ar[rr]|-{\theta_{{V_4}{V_3}}} && *+[F-:<10pt>]{V_4}\\
}
\label{fig:order.rearrangement}
\end{equation}

\noindent In this way, the original matrix $\bfTheta_{V'V'}$,
\begin{equation}
\bfTheta_{V'V'} =
\begin{pmatrix}
0 & 0 & 0 & \theta_{{V'_1}{V'_4}} \\
\theta_{{V'_2}{V'_1}} & 0 & \theta_{{V'_2}{V'_3}} & \theta_{{V'_2}{V'_4}} \\
0 & 0 & 0 & \theta_{{V'_3}{V'_4}} \\
0 & 0 & 0 & 0 \\
\end{pmatrix}~,
\end{equation}
is rearranged as the lower triangular matrix,
\begin{equation}
\bfTheta_{VV} =
\begin{pmatrix}
0 & 0 & 0 & 0 \\
\theta_{{V_2}{V_1}} & 0 & 0 & 0 \\
\theta_{{V_3}{V_1}} & 0 & 0 & 0 \\
\theta_{{V_4}{V_1}} & \theta_{{V_4}{V_2}} & \theta_{{V_4}{V_3}} & 0 \\
\end{pmatrix}~.
\end{equation}
Now, recalling that the determinant of a (lower or upper) triangular matrix is given by the product of its diagonal elements and that a triangular matrix is invertible if and only if none of its diagonal elements is zero, we see that $(\bfI - \bfTheta_{VV})$ is always invertible because all diagonal elements are always equal to 1.

\section{Remarks on identification issues}

Under the assumption that all the confounders and mediators are observed, we can identify the direct and indirect causal effects of response on the features. In particular, a simple least squares estimation procedure provides consistent estimates of these causal effects\footnote{Here, we assume that the number of samples is larger than the number of covariates in the regression fits, and that multicolinearity is not an issue too.}. To see why, note that for the reparameterized model, if all confounders and mediators are observed, it follows from the Markov property of DAGs that $X_j = f_{X_j}(\bfC, \bfM, Y, W_{X_j}) = f_{X_j}(pa(X_j), W_{X_j})$. (Here, $f_{X_j}$ represents linear structural causal models). Hence, for the anticausal task, it follows that, when we regress $X_j$ on the elements of $\bfC$, $\bfM$, and $Y$ only the coefficients associated with the parents of $X_j$ in the reparameterized model will be statistically different from zero (for large enough sample sizes). Therefore, in practice, we don't need to know before hand which variables are the parents of $X_j$ in the reparameterized model. The parent set will be learned automatically from the data by the regression model fit.

Observe, as well, that even if the mediators are unobserved, but the confounders are still observed, we can still identify total causal effects. For instance, we have that,
\begin{align*}
\bfX &= \bfGamma_{XC} \, \bfC + \bfGamma_{XM} \, \bfM + \bfGamma_{XY} \, Y + \bfW_X~, \\
&= \bfGamma_{XC} \, \bfC + \bfGamma_{XM} \, (\bfGamma_{MC} \, \bfC + \bfGamma_{MY} \, Y + \bfW_M) + \bfGamma_{XY} \, Y + \bfW_X~, \\
&= \underbrace{(\bfGamma_{XC} + \bfGamma_{XM} \, \bfGamma_{MC})}_{\tilde{\bfGamma}_{XC}} \, \bfC + \underbrace{(\bfGamma_{XY} + \bfGamma_{XM} \, \bfGamma_{MY})}_{\tilde{\bfGamma}_{XY}} \, Y + \underbrace{\bfGamma_{XM} \, \bfW_M+ \bfW_X}_{\tilde{\bfW}_X}~, \\
&= \tilde{\bfGamma}_{XC} \, \bfC + \tilde{\bfGamma}_{XY} \, Y + \tilde{\bfW}_X~,
\end{align*}
where $\tilde{\bfGamma}_{XY} = \bfGamma_{XY} + \bfGamma_{XM} \, \bfGamma_{MY}$ represents the total causal effect of $Y$ on $\bfX$, as represented in the DAG $M_b$ in in the causal task model (\ref{eq:anticausal.task.total}).
\begin{equation}
\xymatrix@-0.0pc{
(M_a) & *+[F-:<10pt>]{\bfC} \ar[dl]_{\bfGamma_{XC}} \ar[dr]^{\bfGamma_{YC}} \ar@/^1.5pc/[dd]|-{\bfGamma_{MC}} & & (M_b) & *+[F-:<10pt>]{\bfC} \ar[dl]_{\tilde{\bfGamma}_{XC}} \ar[dr]^{\bfGamma_{YC}} & \\
*+[F-:<10pt>]{\bfX} & & *+[F-:<10pt>]{Y} \ar@/^0.75pc/[ll]|-{\bfGamma_{XY}} \ar[dl]^{\bfGamma_{MY}}  & *+[F-:<10pt>]{\bfX} && *+[F-:<10pt>]{Y} \ar[ll]^{\tilde{\bfGamma}_{XY}} \\
& *+[F-:<10pt>]{\bfM} \ar[ul]^{\bfGamma_{XM}} & & &  & \\
}
\label{eq:anticausal.task.total}
\end{equation}

On the other hand, if the mediators are observed, but some the confounders are unobserved, then neither the direct, the indirect, or the total causal effects are identifiable, and the predictions generated by the causality-aware approach will still be confounded. For instance, for the anticausal prediction tasks in model (\ref{eq:anticausal.task.biased}) the unmeasured confounders of the feature/response relationship will still confound the predictions.
\begin{equation}
{\footnotesize
\xymatrix@-1.3pc{
(M_a) & *+[F-:<10pt>]{\bfC} \ar[dl] \ar[drrr] \ar[ddr] &  & \bfU \ar[dlll] \ar[dr] \ar[ddl] & & (M_d) & *+[F-:<10pt>]{\bfC} \ar[drrr] \ar[ddr] &  & \bfU \ar[dlll] \ar[dr] \ar[ddl] & & (M_i) & *+[F-:<10pt>]{\bfC} \ar[drrr] &  & \bfU \ar[dlll] \ar[dr] \ar[ddl] & \\
*+[F-:<10pt>]{\bfX} & & & & *+[F-:<10pt>]{Y} \ar[llll] \ar[dll] & *+[F-:<10pt>]{\bfX^\ast} & & & & *+[F-:<10pt>]{Y} \ar[llll] \ar[dll] & *+[F-:<10pt>]{\bfX^\ast} & & & & *+[F-:<10pt>]{Y} \ar[dll]  \\
& & *+[F-:<10pt>]{\bfM} \ar[ull] & & & & & *+[F-:<10pt>]{\bfM} & & & & & *+[F-:<10pt>]{\bfM^\ast} \ar[ull] & & \\
}}
\label{eq:anticausal.task.biased}
\end{equation}

Finally, observe that while so far we have discussed confounding of the feature/response relationship, it is also possible that the causal relations between features and mediators or between mediators and response are also influenced by confounders. If these confounders are unobserved, then we cannot identify the causal effects $\bfGamma_{XM}$ and $\bfGamma_{MY}$. Clearly, in the presence of unobserved confounding the causality-aware predictions will be biased, whenever the causal effects of interest are not identifiable.

\section{Proof of Theorem 1}

Before we present the proof, we first clarify that, in the multivariate case, the covariance between two vectors of random variables, $\bfA = (A_1, \ldots, A_{N_A})^T$ and $\bfB = (B_1, \ldots, B_{N_B})^T$, is given by the cross-covariance operator, $Cov(\bfA, \bfB)$, defined and the $N_A \times N_B$ matrix with elements $Cov(A_i, B_j)$.

For the proof we will use the following properties of the cross-covariance operator:
\begin{enumerate}
\item $Cov(\bfZ_1 + \bfZ_2, \bfZ_3) = Cov(\bfZ_1, \bfZ_3) + Cov(\bfZ_2, \bfZ_3)$,
\item $Cov(\bfA \, \bfZ_1, \bfB \, \bfZ_2) = \bfA \, Cov(\bfZ_1, \bfZ_2) \, \bfB^T$, where $\bfA$ and $\bfB$ are constant matrices
\item $Cov(\bfZ, \bfZ) = Cov(\bfZ)$, where $Cov(\bfZ)$ is the variance covariance matrix of $\bfZ$.
\end{enumerate}

The proof is straight forward, and follow directly from the above three properties. For completeness we restate the Theorem.


\begin{theorem}
Consider an anticausal prediction task:

\textit{(i) When the interest focus on the causal effects generated by the paths in $Y \rightarrow \bfX$. If $\bfX^\ast$ is given by $\bfX^\ast = \bfGamma_{XY} \, Y + \bfW_X$, then $Cov(\bfX^\ast, Y) = \bfGamma_{XY}$.}

\textit{(ii) When the interest focus on the causal effects generated by the paths in $Y \rightarrow \bfM \rightarrow \bfX$. If  $\bfX^\ast$ is given by $\bfX^\ast = \bfGamma_{XM} \, \bfM^\ast + \bfW_X$, and $\bfM^\ast = \bfGamma_{MY} \, Y + \bfW_M$, then $Cov(\bfX^\ast, Y) = \bfGamma_{XM} \, \bfGamma_{MY}$.}

\textit{(iii) When the interest focus on the spurious associations generated by the paths in $\bfX \leftarrow \bfC \rightarrow Y$. If  $\bfX^\ast$ is given by $\bfX^\ast = \bfGamma_{XC} \, \bfC + \bfW_X$, then $Cov(\bfX^\ast, Y) = \bfGamma_{XC} \, Cov(\bfC) \, \bfGamma_{YC}^T$.}
\end{theorem}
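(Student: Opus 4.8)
The plan is to prove the three claims by direct computation with the cross-covariance operator, using only the three bilinearity properties listed above together with one structural fact about the reparameterized model. That fact is: by the construction in Section 4.2.1, $\bfW_X = (\bfI - \bfTheta_{XX})^{-1}\bfU_X$ is a deterministic function of $\bfU_X$ alone, $\bfW_M = (\bfI - \bfTheta_{MM})^{-1}\bfU_M$ is a function of $\bfU_M$ alone, $W_Y = U_Y$, and $Y = \bfGamma_{YC}\bfW_C + U_Y$ is a function of $\bfU_C$ and $U_Y$ only. Since $\bfU_C, U_Y, \bfU_M, \bfU_X$ are mutually independent, this yields $Cov(\bfW_X, Y) = \bfzero$, $Cov(\bfW_X, \bfC) = \bfzero$, $Cov(\bfW_M, Y) = \bfzero$, and $Cov(W_Y, \bfC) = \bfzero$. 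Combined with the standardization assumption $Var(Y) = 1$ (i.e. $Cov(Y, Y) = 1$), these are the only ingredients needed.

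\textbf{The three computations.} For (i), I would expand $Cov(\bfX^\ast, Y) = Cov(\bfGamma_{XY}\,Y + \bfW_X,\, Y)$ by property 1, apply property 2 and property 3 to the first term to get $\bfGamma_{XY}\,Cov(Y) = \bfGamma_{XY}$, and drop the second term since $\bfW_X \ci Y$. For (ii), I would first show $Cov(\bfM^\ast, Y) = Cov(\bfGamma_{MY}\,Y + \bfW_M,\, Y) = \bfGamma_{MY}$ exactly as in (i), now using $Cov(\bfW_M, Y) = \bfzero$, and then substitute into $Cov(\bfX^\ast, Y) = Cov(\bfGamma_{XM}\,\bfM^\ast + \bfW_X,\, Y) = \bfGamma_{XM}\,Cov(\bfM^\ast, Y) + Cov(\bfW_X, Y) = \bfGamma_{XM}\,\bfGamma_{MY}$. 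For (iii), I would write $Cov(\bfX^\ast, Y) = \bfGamma_{XC}\,Cov(\bfC, Y) + Cov(\bfW_X, Y) = \bfGamma_{XC}\,Cov(\bfC, Y)$, then substitute $Y = \bfGamma_{YC}\,\bfC + W_Y$ to obtain $Cov(\bfC, Y) = Cov(\bfC, \bfGamma_{YC}\,\bfC) + Cov(\bfC, W_Y) = Cov(\bfC)\,\bfGamma_{YC}^T$, the final covariance vanishing because $W_Y \ci \bfC$; combining gives $Cov(\bfX^\ast, Y) = \bfGamma_{XC}\,Cov(\bfC)\,\bfGamma_{YC}^T$.

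\textbf{Main obstacle.} There is essentially no hard step here: the computations are routine applications of bilinearity. The only point that deserves explicit justification, and which I would state before the three computations, is that the reparameterized residuals $\bfW_X$ and $\bfW_M$ are independent of the upstream variables $Y$ and $\bfC$ (so that all cross-terms drop). This is immediate from the form $\bfW_V = (\bfI - \bfTheta_{VV})^{-1}\bfU_V$ and the assumed mutual independence of $\bfU_C, U_Y, \bfU_M, \bfU_X$, but it is the one substantive ingredient and worth spelling out; everything else is algebra.
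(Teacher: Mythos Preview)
Your proposal is correct and follows essentially the same route as the paper's own proof: both expand the cross-covariance by bilinearity, drop the $\bfW_X$ (and $\bfW_M$, $W_Y$) cross-terms, and use $Var(Y)=1$ together with the substitution $Y=\bfGamma_{YC}\bfC+W_Y$ in part (iii). The only difference is that you explicitly justify why $Cov(\bfW_X,Y)$, $Cov(\bfW_M,Y)$, and $Cov(\bfC,W_Y)$ vanish by tracing back to the mutual independence of $\bfU_C, U_Y, \bfU_M, \bfU_X$, whereas the paper's proof simply drops these terms without comment; your version is slightly more careful in this respect.
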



\begin{proof}
$ $

Result \textit{i}: If $\bfX^\ast = \bfGamma_{XY} \, Y + \bfW_X$, then,
\begin{align*}
Cov(\bfX^\ast, Y) &= Cov(\bfGamma_{XY} \, Y + \bfW_X, Y) \\
&= \bfGamma_{XY} \, Cov(Y, Y) \\
&= \bfGamma_{XY}
\end{align*}

Result \textit{ii}: If $\bfX^\ast = \bfGamma_{XM} \, \bfM^\ast + \bfW_X$ and $\bfM^\ast = \bfGamma_{MY} \, Y + \bfW_M$, then,
\begin{align*}
Cov(\bfX^\ast, Y) &= Cov(\bfGamma_{XM} \, \bfM^\ast + \bfW_X, Y) \\
&= \bfGamma_{XM} \, Cov(M^\ast, Y)  \\
&= \bfGamma_{XM} \, Cov(\bfGamma_{MY} \, Y + \bfW_M, Y) \\
&= \bfGamma_{XM} \, \bfGamma_{MY} \, Var(Y) \\
&= \bfGamma_{XM} \, \bfGamma_{MY}
\end{align*}

Result \textit{iii}: If $\bfX^\ast = \bfGamma_{XC} \, \bfC + \bfW_X$, then,
\begin{align*}
Cov(\bfX^\ast, Y) &= Cov(\bfGamma_{XC} \, \bfC + \bfW_X, Y) \\
&= \bfGamma_{XC} \, Cov(\bfC, Y)  \\
&= \bfGamma_{XC} \, Cov(\bfC, \bfGamma_{YC} \, \bfC + W_Y) \\
&= \bfGamma_{XC} \, Cov(\bfC, \bfC) \, \bfGamma_{YC}^T \\
&= \bfGamma_{XC} \, Cov(\bfC) \, \bfGamma_{YC}^T
\end{align*}
\end{proof}

\section{Expected MSE for arbitrary anticausal prediction tasks based on linear models}

Consider the arbitrary anticausal prediction task model in Figure \ref{fig:confounded.anticausal.example},
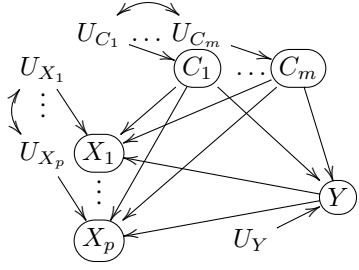
\begin{figure}[!h]
$$
\xymatrix@-2.2pc{
& U_{C_1} \ar[drr] \ar@/^1pc/@{<->}[rr] & \ldots & U_{C_m} \ar[drr] &&& \\
U_{X_1} \ar[ddr] \ar@/_1pc/@{<->}[dd] & & & *+[F-:<10pt>]{C_1} \ar[ddll] \ar[ddddll] \ar[dddrrr] & \ldots & *+[F-:<10pt>]{C_m} \ar[ddllll] \ar[ddddllll] \ar[dddr] \\
\vdots &&&&&&  \\
U_{X_p} \ar[ddr] & *+[F-:<10pt>]{X_1} &&&& \\
& \vdots & & & & & *+[F-:<10pt>]{Y} \ar[ulllll] \ar[dlllll] \\
& *+[F-:<10pt>]{X_p} &&& U_Y \ar[urr] && \\
}
$$
\vskip -0.1in
  \caption{Confounded anticausal prediction task example.}
  \label{fig:confounded.anticausal.example}
  \vskip -0.1in
\end{figure}
where the double arrows connecting the variables $\{U_{X_1}, \ldots, U_{X_p}\}$ (and $\{U_{C_1}, \ldots, U_{C_m}\}$) represent the fact that these error terms are correlated\footnote{Note that the above model might represent a reparameterization of a model with uncorrelated error terms and unknown causal relations among the $\bfX$ input variables, as well as, among the $\bfC$ confounder variables. As described in detail in the main text, for linear structural equation models, we can always reparameterize the original model in a way where the covariance structure among the input variables, as well as, the covariance structure among the confounder variables is pushed towards the respective error terms as illustrated in Figure \ref{fig:confounded.anticausal.example}.}. Without loss of generality assume that the data has been centered, so that the linear structural causal models describing the data generation process are given by,
\begin{align}
C_j &= U_{C_j}~, \\
Y &= \sum_i \beta_{Y{C_i}} \, C_i + U_Y~, \\
X_j &= \beta_{{X_j}Y} Y + \sum_i \beta_{{X_j}{C_i}} \, C_i + U_{X_j}~, \label{eq:feature.model}
\end{align}
for $j = 1, \ldots, p$ and $i = 1, \ldots, m$. The causality-aware features are estimated as,
\begin{align}
\hat{X}_j^\ast &= X_j - \sum_i \hat{\beta}_{{X_j}{C_i}} C_i~,
\end{align}
and converge asymptotically to,
\begin{align}
X_j^\ast \, &= \, X_j - \sum_i \beta_{{X_j}{C_i}} \, C_i \, = \, \beta_{{X_j}Y} \, Y + U_{X_j}~. \label{eq:feature.model.ca}
\end{align}




Now, let $\hat{Y} = \bfX_{ts} \hat{\bfBeta}^{tr}$ represent the prediction of a linear regression model, where $\bfX_{ts}$ represents the test set features, and $\hat{\bfBeta}^{tr}$ represents the regression coefficients estimated from the training set. By definition, the expected mean squared error of the prediction is given by,
\begin{align*}
E[MSE] &= E[(Y_{ts} - \hat{Y})^2] = E[Y^2_{ts}] + E[\hat{Y}^2] - 2 E[\hat{Y} Y_{ts}] \\
&= Var(Y_{ts}) + E[\hat{Y}^2] - 2 Cov(\hat{Y}, Y_{ts})~,
\end{align*}
since $E[Y_{ts}] = 0$. Direct computation shows that,
\begin{align*}
E[\hat{Y}^2] &= E[(\sum_{j=1}^{p} X_{j,ts} \hat{\beta}_j^{tr})^2] = \sum_{j=1}^{p} (\hat{\beta}_j^{tr})^2 Var(X_{j,ts}) \, + 2 \sum_{j < k} \hat{\beta}_j^{tr} \hat{\beta}_k^{tr} Cov(X_{j,ts}, X_{k,ts})~,
\end{align*}
and,
\begin{align*}
Cov(\hat{Y}, Y_{ts}) &= \sum_{j=1}^{p} \hat{\beta}_j^{tr} Cov(X_{j,ts}, Y_{ts})~,
\end{align*}
so that,
\begin{align*}
E[MSE] \, &=  \, Var(Y_{ts}) \, + \sum_{j=1}^{p} (\hat{\beta}_j^{tr})^2 Var(X_{j,ts}) \, + \\
&+ 2 \sum_{j < k} \hat{\beta}_j^{tr} \hat{\beta}_k^{tr} Cov(X_{j,ts}, X_{k,ts}) - 2 \sum_{j=1}^{p} \hat{\beta}_j^{tr} Cov(X_{j,ts}, Y_{ts})~.
\end{align*}

Next, we derive the expressions for $Var(X_{j,ts})$, $Cov(X_{j,ts}, X_{k,ts})$, and $Cov(X_{j,ts}, Y_{ts})$ and show that they still depend on $Cov(Y_{ts}, C_{i,ts})$. From equation (\ref{eq:feature.model}) we have that,
\begin{align*}
Var(X_{j,ts}) &= Var(\beta_{{X_j}Y} \, Y_{ts} + \sum_i \beta_{{X_j}{C_i}} \, C_{i,ts} + U_{X_j}^{ts}) \\
&= \sigma^2_{X_j} + \beta_{{X_j}Y}^2 \, Var(Y_{ts}) + \sum_i \beta_{{X_j}{C_i}}^2 \, Var(C_{i,ts}) \, + \\
&\;\;\;\; + 2 \, \sum_{i < i'} \beta_{{X_j}{C_i}} \, \beta_{{X_j}{C_{i'}}} \, Cov(C_{i,ts}, C_{i',ts}) + \, 2 \, \beta_{{X_j}Y} \sum_i \beta_{{X_j}{C_i}} \, Cov(Y_{ts}, C_{i,ts})~, \\
\end{align*}
\begin{align*}
Cov(X_{j,ts}, X_{k,ts}) &= Cov(\beta_{{X_j}Y} \, Y_{ts} + \sum_i \beta_{{X_j}{C_i}} \, C_{i,ts} + U_{X_j}^{ts} \; , \; \beta_{{X_k}Y} \, Y_{ts} + \sum_i \beta_{{X_k}{C_i}} \, C_{i,ts} + U_{X_k}^{ts}) \\
&= \beta_{{X_j}Y} \, \beta_{{X_k}Y} \, Var(Y_{ts}) + \beta_{{X_j}Y} \, \sum_i \beta_{{X_k}{C_i}} \, Cov(Y_{ts}, C_{i,ts}) + \\
&\;\;\;\;+ \beta_{{X_k}Y} \, \sum_i \beta_{{X_j}{C_i}} \, Cov(Y_{ts}, C_{i,ts}) + \sum_i \sum_{i'} \beta_{{X_j}{C_i}} \, \beta_{{X_k}{C_{i'}}} \, Cov(C_{i,ts}, C_{i',ts}) + \\
&\;\;\;\;+ Cov(U_{X_j}^{ts}, U_{X_k}^{ts})
\end{align*}

\begin{align*}
Cov(X_{j,ts}, Y_{ts}) &= Cov(\beta_{{X_j}Y} \, Y_{ts} + \sum_i \beta_{{X_j}{C_i}} \, C_{i,ts} + U_{X_j}, Y_{ts}) \\
&= \beta_{{X_j}Y} \, Var(Y_{ts}) + \sum_i \beta_{{X_j}{C_i}} \, Cov(Y_{ts}, C_{i,ts})~,
\end{align*}
showing that these three quantities still depend on $Cov(Y_{ts}, C_{i,ts})$ (in addition to depending on $Var(Y_{ts})$, $Var(C_{ts})$, and $Cov(C_{i,ts}, C_{i',ts})$). This observation implies that the $E[MSE]$ will still be unstable w.r.t. shifts in these quantities, even when the regression model is trained in unconfounded data (a situation where the estimates $\hat{\beta}_j^{tr}$ are not influenced by spurious associations generated by the confounders). This explains why it is not enough to deconfound the training features alone. While training a regression model using deconfounded features allows us to estimate deconfounded model weights\footnote{Note that the weights $\hat{\bfBeta}_{tr}$ are not causal effects, since they represent the coefficients of the regression of $Y_{tr}$ on $\bfX_{tr}$, while in the true data generation process $Y_{tr}$ is the independent variable and $\bfX_{tr}$ represents the dependent variables. Still, the estimate $\hat{\bfBeta}_{tr}$ will not absorb spurious associations when the model is trained with unconfounded data.},  $\hat{\bfBeta}^{tr}$, the prediction $\hat{Y} = \bfX_{ts} \hat{\bfBeta}^{tr}$ is a function of both the trained model $\hat{\bfBeta}^{tr}$ and the test set features, $\bfX_{ts}$. As a consequence, if we do not deconfound the test set features, the expected MSE will still be influenced by the confounders (since, in anticausal prediction tasks, the features are functions of both the confounder and outcome variables).


The expected MSE of models trained with test set features processed according to the causality-aware approach, on the other hand, do not depend on $Cov(Y_{ts}, C_{i,ts})$, $Var(C_{ts})$, or $Cov(C_{i,ts}, C_{i',ts})$, since the approach also deconfounds the test set features. Note that direct computation of $Var(X_{j,ts}^\ast)$, $Cov(X_{j,ts}^\ast, X_{k,ts}^\ast)$, and $Cov(X_{j,ts}^\ast, Y_{ts})$ based on the causality-aware features, $X_{j,ts}^\ast = \beta_{{X_j}Y} \, Y_{ts} + U_{X_j}^{ts}$, shows that,
\begin{align*}
Var(X_{j,ts}^\ast) &= Var(\beta_{{X_j}Y} \, Y_{ts} + U_{X_j}^{ts}) = \sigma^2_{X_j} + \beta_{{X_j}Y}^2 \, Var(Y_{ts})~,
\end{align*}
\begin{align*}
Cov(X_{j,ts}^\ast, X_{k,ts}^\ast) &= Cov(\beta_{{X_j}Y} \, Y_{ts} + U_{X_j}^{ts}, \beta_{{X_k}Y} \, Y_{ts} + U_{X_k}^{ts}) \\
&= \beta_{{X_j}Y} \, \beta_{{X_k}Y} \, Var(Y_{ts}) + Cov(U_{X_j}^{ts}, U_{X_k}^{ts})~,
\end{align*}
\begin{align*}
Cov(X_{j,ts}^\ast, Y_{ts}) &= Cov(\beta_{{X_j}Y} \, Y_{ts} + U_{X_j}^{ts}, Y_{ts}) = \beta_{{X_j}Y} \, Var(Y_{ts})~,
\end{align*}
no longer depend on $Cov(Y_{ts}, C_{i,ts})$, $Var(C_{ts})$, or $Cov(C_{i,ts}, C_{i',ts})$, so that the approach will be stable against shifts in these quantities. Observe, nonetheless, that it will still be influenced by shifts in $Var(Y_{ts})$. (We point out, however, that the dependence of $E[MSE]$ on $Var(Y_{ts})$ is, in general, unavoidable since, by definition, $E[MSE]= Var(Y_{ts}) + E[\hat{Y}^2] - 2 Cov(\hat{Y}, Y_{ts})$.)

\section{Extensions to arbitrary performance metrics and arbitrary structural causal models}

Here, we extend the argument presented in the previous section to arbitrary performance metrics and arbitrary structural causal models.

Let $M = h_1(Y_{ts}, \hat{Y})$ represent an arbitrary performance metric, and let $\hat{Y} = h_2(\bfomega_{tr}, \bfX_{ts}) = h_2(\bfomega_{tr}, X_{1,ts}, \ldots, X_{p,ts})$ represent a prediction generated with an arbitrary ML model $\bfomega_{tr}$. Note that $\bfomega_{tr} = h_3(\bfX_{tr}, Y_{tr})$ is a function of the training data. Assume the features $X_{j}$ are generated by an arbitrary structural causal model $X_j = f(Y, \bfC, U_{X_j})$. Then the expected value of $M$, with respect to the test set data distribution is given by,
\begin{align*}
E[M] &= E[h_1(Y_{ts}, \hat{Y})] \\
&= E[h_1(Y_{ts}, h_2(\bfomega_{tr}, X_{1,ts}, \ldots, X_{p,ts}))] \\
&= E[h_1(Y_{ts}, h_2(\bfomega_{tr}, f(Y_{ts}, \bfC_{ts}, U_{X_1}^{ts}), \ldots, f(Y_{ts}, \bfC_{ts}, U_{X_p}^{ts})))]~,
\end{align*}
showing that even when we train the model $\bfomega_{tr}$ using deconfounded training data, we have that $E[M]$ is still a function of $\bfC_{ts}$, and will be unstable with respect to shifts in $P(\bfC_{ts}, Y_{ts})$.


Observe, however, that if we are able to deconfound the test set features, so that the counterfactual features $X_{j,ts}^\ast = f^\ast(Y_{ts}, U_{X_j}^{ts})$ are no longer a function of the confounders, then we have that,
\begin{align*}
E[M] &= E[h_1(Y_{ts}, \hat{Y}^\ast)] \\
&= E[h_1(Y_{ts}, h_2(\bfomega_{tr}, X_{1,ts}^\ast, \ldots, X_{p,ts}^\ast))] \\
&= E[h_1(Y_{ts}, h_2(\bfomega_{tr}, f^\ast(Y_{ts}, U_{X_1}^{ts}), \ldots, f^\ast(Y_{ts}, U_{X_p}^{ts})))]~,
\end{align*}
will no longer depend on $\bfC_{ts}$. Note that while the predictive performance will still depend on the distribution of $Y_{ts}$ and, therefore, will still be unstable with respect to shifts in the marginal distribution $P(Y_{ts})$, the approach will still be stable with respect to shifts in the conditional distribution $P(\bfC_{ts} \mid Y_{ts})$.

\section{Additional details - synthetic data experiments}

In our experiments, we compare the causality-aware approach against two ``archetypical" baselines: (1) one representing adjustment approaches that remove the causal effect of the confounders from the features, denoted \textit{baseline 1}; and (2) another representing approaches that remove the association between the confounders and the output, denoted \textit{baseline 2}. In both baseline approaches we adjust the training data but not the test set. Note that for both of these baselines, while the training data is unconfounded, the test data is still confounded. For the causality-aware approach, on the other hand, we generated confounded training and test sets and then apply our adjustment for both the training and test sets.


The confounded data is generated from the model,
$$
\xymatrix@-1.2pc{
& & & *+[F-:<10pt>]{C} \ar[dll]_{\beta_{{X_1}C}} \ar[ddll] \ar[ddddll] \ar[ddr]^{\beta_{YC}} & U_{C} \ar[l] \ar@/^1pc/@{<->}[ddr] & \\
U_{X_1} \ar[r] \ar@/_1pc/@{<->}[dd] \ar@/_1.5pc/@{<->}[ddd] & *+[F-:<10pt>]{X_{1}} &&&& \\
& *+[F-:<10pt>]{X_{2}} & & & *+[F-:<10pt>]{Y} \ar[ulll] \ar[lll] \ar[ddlll]^{\beta_{{X_{10}}Y}} & U_Y \ar[l] \\
U_{X_{2}} \ar[ru] & \vdots &&&& \\
U_{X_{10}} \ar[r] & *+[F-:<10pt>]{X_{10}} &&&  \\
}
$$
where we change the covariance of the error terms $U_C$ and $U_Y$ in order to simulate the effects of selection biases in the joint distribution $P(C, Y)$.

The model is described by the following set of linear structural causal equations,
\begin{align}
C &= U_{C}~, \label{eq:C.regr.model} \\
Y &= \beta_{YC} \, C + U_Y~, \label{eq:Y.regr.model} \\
X_{j} &= \beta_{{X_j}Y} \, Y + \beta_{{X_j}{C}} \, C + U_{X_j}~, \label{eq:X.regr.model}
\end{align}
for $j = 1, \ldots, 10$, and where the error terms $U_C$ and $U_X$ are distributed according to,
\begin{equation}
\begin{pmatrix}
U_{C} \\
U_{Y} \\
\end{pmatrix} \,
\sim \, \mbox{N}_2\left(
\begin{pmatrix}
0 \\
0 \\
\end{pmatrix}\, , \,
\begin{pmatrix}
\phi_{CC} & \phi_{CY} \\
\phi_{CY} & \phi_{YY} \\
\end{pmatrix} \right)~, \label{eq:errors.C.Y.distr}
\end{equation}
and $\bfU_X = (U_{X_1}, \ldots, U_{X_{10}})^T$ is distributed according to a multivariate normal distribution,
\begin{equation}
\bfU_X \, \sim \, N_{10}(\bfzero \, , \ \bfSigma_{\bfU_X})~, \label{eq:errors.X.distr}
\end{equation}
where the $ij$th entry of the covariance matrix $\bfSigma_{\bfU_X}$ is given by 1 for $i = j$, and by $\rho^{|i-j|}$ for $i \not= j$.

Note that, for the above model, we have that,
\begin{align}
Var(C) &= \phi_{CC}~, \\
Cov(Y, C) &= Cov(\beta_{YC} \, C + U_Y, C) = \beta_{YC} \, Var(C) + Cov(U_Y, C) \nonumber \\
&= \beta_{YC} \, \phi_{CC} + \phi_{CY}~, \\
Var(Y) &= Var(\beta_{YC} \, C + U_Y) = \beta_{YC}^2 \, Var(C) + Var(U_Y) + 2 \, \beta_{YC} \, Cov(C, U_Y) \nonumber \\
&= \beta_{YC}^2 \, \phi_{CC} + \phi_{YY} + 2 \, \beta_{YC} \, \phi_{CY}~,
\end{align}
so that for fixed values of $Var(C)$, $Cov(Y, C)$, $Var(Y)$, and $\beta_{{Y}{C}}$ we can determine the values of $\phi_{CC}$, $\phi_{CY}$, and $\phi_{YY}$ as follows,
\begin{align}
\phi_{CC} &= Var(C)~, \label{eq:phi.CC} \\
\phi_{CY} &= Cov(Y, C) - \beta_{YC} \, Var(C)~, \label{eq:phi.CY} \\
\phi_{YY} &= Var(Y) - \beta_{YC}^2 \, Var(C) -2 \, \beta_{YC} \, Cov(Y, C)~. \label{eq:phi.YY}
\end{align}



In our experiments, we simulate training and test set data as follows:
\begin{enumerate}
\item Sample the simulation parameters $\beta_{{X_j}Y}$, $\beta_{{X_j}{C}}$, and $\beta_{{Y}{C}}$ from a $U(-1, 1)$ distribution, and $\rho$ from a $U(-0.5, 0.5)$ distribution.
\item Given the fixed values for $Var(C_{tr})$, $Cov(Y_{tr}, C_{tr})$, and $Var(Y_{tr})$, and the sampled value for $\beta_{{Y}{C}}$, we compute $\phi_{CC}$, $\phi_{CY}$, and $\phi_{YY}$ as described in equations (\ref{eq:phi.CC}), (\ref{eq:phi.CY}), and (\ref{eq:phi.YY}).
\item Sample the error terms $U_C^{tr}$ and $U_Y^{tr}$ according to (\ref{eq:errors.C.Y.distr}), and the error terms $\bfU_X^{tr}$ according to (\ref{eq:errors.X.distr}).
\item Simulate 3 separate training sets, the confounded one (where we apply the causality-aware adjustment), and the baseline 1 and baseline 2 training sets (using the exact same error terms sampled in the previous step). The confounded training set was generated according to the following model,
    \begin{align}
    C_{tr} &= U_{C}^{tr}~, \\
    Y_{tr} &= \beta_{YC} C_{tr} + U_Y^{tr}~, \\
    X_{j,tr} &= \beta_{{X_j}Y} Y_{tr} + \beta_{{X_j}{C}} C_{tr} + U_{X_j}^{tr}~.
    \end{align}
    The baseline 1 training data was generated according to the model,
    \begin{align}
    C_{tr} &= U_{C}^{tr}~, \\
    Y_{tr} &= \beta_{YC} C_{tr} + U_Y^{tr}~, \\
    X_{j,tr} &= \beta_{{X_j}Y} Y_{tr} + U_{X_j}^{tr}~,
    \end{align}
    while the baseline 2 training data was generated according to the model,
    \begin{align}
    C_{tr} &= U_{C}^{tr}~, \\
    Y_{tr} &= U_Y^{tr}~, \\
    X_{j,tr} &= \beta_{{X_j}Y} Y_{tr} + \beta_{{X_j}{C}} C_{tr} + U_{X_j}^{tr}~.
    \end{align}
\item Simulate 9 distinct confounded test sets (indexed by ${ts_k}$, for $k = 1, \ldots, 9$). Each test set is simulated as follows:
\begin{enumerate}
\item Given the fixed values for $Var(C_{{ts}_k})$, $Cov(Y_{{ts}_k}, C_{{ts}_k})$, and $Var(Y_{{ts}_k})$, and the sampled value for $\beta_{{Y}{C}}$, we compute $\phi_{CC}$, $\phi_{CY}$, and $\phi_{YY}$ as described in equations (\ref{eq:phi.CC}), (\ref{eq:phi.CY}), and (\ref{eq:phi.YY}).
\item Sample the error terms $U_C^{{ts}_k}$ and $U_Y^{{ts}_k}$ according to (\ref{eq:errors.C.Y.distr}), and the error terms $\bfU_X^{{ts}_k}$ according to (\ref{eq:errors.X.distr}).
\item Simulate the test set data according to the model,
    \begin{align}
    C_{{ts}_k} &= U_{C}^{{ts}_k}~, \\
    Y_{{ts}_k} &= \beta_{YC} C_{{ts}_k} + U_Y^{{ts}_k}~, \\
    X_{j,{ts}_k} &= \beta_{{X_j}Y} Y_{{ts_k}} + \beta_{{X_j}{C}} C_{{ts_k}} + U_{X_j}^{{ts}_k}~,
    \end{align}
\end{enumerate}
\end{enumerate}
Note that, in order to generate dataset shifts in $P(C, Y)$, we allow $Var(C)$, $Cov(Y, C)$, and $Var(Y)$ to vary between the training and test sets. However, in order to maintain the stability of $P(\bfX \mid C, Y)$ we use the same sampled values of $\beta_{{X_j}Y}$, $\beta_{{X_j}{C}}$, $\beta_{{Y}{C}}$ and $\rho$ in the generation of the training and test sets.



In order to illustrate the influence of $Var(Y_{ts})$ in the stability of the predictions, we performed two experiments. In the first, we kept the $Var(Y_{ts})$ constant across the test sets. In the second, we increased $Var(Y_{ts})$ across the test sets. Each of these experiments were based on 1000 simulations. For each simulation replication we:
\begin{enumerate}
\item Generate the 3 training sets ($n = 1,000$) by setting $Var(C_{tr}) = 1$, $Cov(Y_{tr}, C_{tr}) = 0.8$, and $Var(Y_{tr}) = 1$ and then simulating the data as described above.
\item Generate 9 distinct test sets (each containing $n = 1,000$ samples). Each test set was generated with an increasing amount of shift in the $P(C, Y)$ distribution. In the first experiment this was accomplished by varying $Cov(Y_{{ts}_k}, C_{{ts}_k})$ according to $\{0.8$, 0.6, 0.4, 0.2, 0.0, -0.2, -0.4, -0.6, $-0.8\}$ across the 9 test sets, and by varying $Var(C_{{ts}_k})$ according to $\{1.00$, 1.25, 1.50, 1.75, 2.00, 2.25, 2.50, 2.75, $3.00\}$, while keeping $Var(Y_{{ts}_k})$ fixed at 1 for all $k$. In the second experiment, we varied $Cov(Y_{{ts}_k}, C_{{ts}_k})$ as before, but kept $Var(C_{{ts}_k})$ fixed at 1, while varying $Var(Y_{{ts}_k})$ according to $\{1.00$, 1.25, 1.50, 1.75, 2.00, 2.25, 2.50, 2.75, $3.00\}$ across the test sets.
\item For the causality-aware approach we: adjust the confounded training set, and each of the 9 confounded test sets; fit a regression model on the adjusted training set data; use the same trained model to predict on the 9 adjusted test sets; and evaluate the test set performances using MSE.
\item For the baseline 1 approach we: fit a regression model to the (unconfounded) baseline 1 training set; use the trained model to predict on the 9 confounded test sets; and evaluate the test set performances using MSE.
\item For the baseline 2 approach we: fit a regression model to the (unconfounded) baseline 2 training set; use the trained model to predict on the 9 confounded test sets; and evaluate the test set performances using MSE.
\item For the ``no adjustment" approach we: fit a regression model to the confounded training data; use the trained model to predict on the 9 confounded test sets; and evaluate the test set performances using MSE.
\end{enumerate}
Note that the first test set is generated using the same values of $Cov(Y, C)$, $Var(C)$, and $Var(Y)$ as the training set, so that it illustrates the case where the training and test sets are independent and identically distributed. (Observe that in this setting, performing confounding adjustment may decrease the predictive performance of the learner in situations where the confounder strengths the association between the features and the outcome variable.)

Next, we present a few important remarks.

\begin{enumerate}
\item Note that the (``archetypical") baseline 1 approach is meant to represent methods that attempt to remove the causal effects of the confounders from the features in the training set alone. This includes a poor man's version of the causality-aware approach where we do not process the test set features. Our goal here is to illustrate that while it might seen intuitive that training a learner on unconfounded data will prevent it from learning the confounding signal and, therefore, will lead to more stable predictions in shifted target populations, the unconfounded trained model, $\hat{\bfBeta}^{tr}$ is only one component of the prediction, $\hat{Y} = \bfX_{ts} \hat{\bfBeta}^{tr}$, so that better stability can be achieved by deconfounding the test set features, $\bfX_{ts}$, as well.

\item Second, note that the baseline 2 approach is meant to represent methods that attempt to remove the association between the confounders and the outcome. Those include approaches such as propensity scores for continuous variables~\cite{hirano2004}, covariate balancing propensity score methods for continuous variables~\cite{fong2018}, or standard propensity score matching applied to dichotomized outcome data\footnote{For classification tasks these methods include standard matching and IPW by propensity score methods.}. As described before, rather than implementing these methods, we simulate unconfounded training data where the output is statistically independent from the confounders, which mimics the case where these adjustments worked perfectly. (Observe that, in the particular context of classification tasks, removing the association between labels and confounders represents a common strategy to combat discrimination in fairness research, where data pre-processing techniques such as re-weighting and (under-) over-sampling are applied to the training data alone, in order remove the association between sensitive variables (i.e., confounders) and the classifier labels~\cite{calders2009,kamiran2012})

\item Third, it is important to point out that several approaches proposed in the stable prediction literature are not applicable in our illustrations. For instance, in the context of classification tasks, approaches such as invariant risk minimization~\cite{irm2019} or invariant causal prediction~\cite{peters2016} rely on training data from multiple training sets while our approach focus on a single training set. Furthermore, stable prediction approaches~\cite{kuang2018,kuang2020}, which only require a single training set, can only be applied in causal prediction tasks, while our illustrations focus on anticausal tasks.

\item Fourth, observe that our approach assumes that $P(\bfX \mid \bfC, Y)$ is stable across the test set domains. This assumption is reasonable in several application domains. For instance, in health diagnostic applications, where the goal is to classify (for example) mild vs severe cases of a given disease, using the disease symptoms as inputs, we have that $P(\bfX \mid \bfC, Y)$ tends to be stable for demographic confounders such as age and gender. Note that this distribution would be unstable in the less likely scenario where the individuals in the training set have different symptom severities (caused by age, gender and disease status) than individuals in distinct test sets, pointing to biological/physiological differences between the individuals in training and testing populations. Dataset shifts on $P(\bfC, Y)$, on the other hand, are much more commonly observed in health applications, because selection biases during data collection often mean that the $P(\bfC, Y)$ distribution in the target/test populations are shifted relative to the training data.
    
\item Finally, note that application of counterfactual normalization~\cite{subbaswamy2018} approach to the particular causal graph used in our experiments would augment the causal graph with the counterfactual variables $X_{j}(C = \emptyset)$ (representing the values of $X_j$ we would have seen, had $C$ not being a parent of $X_j$), and would return the counterfactual variables $X_{j}(C = \emptyset)$ as the stable set for predicting $Y$. Because (for this particular example) the counterfactual features, $X_{j}(C = \emptyset)$, are computed in exactly the same way as the causality-aware training features, $X_{j}^\ast = X_{j} - \hat{\beta}_{{X_j}C} \, C$, it follows that application of counterfactual normalization would produce the same results as the causality-aware approach.
\end{enumerate}


\section{The causal prediction task case}

In this paper, we have focused in anticausal prediction tasks. A few analogous results are, nonetheless,
\begin{figure}[!h]
$$
{\footnotesize
\xymatrix@-1.4pc{
& *+[F-:<10pt>]{\bfC} \ar[dl] \ar[dr] \ar[dd] & \\
*+[F-:<10pt>]{\bfX} \ar[rr] \ar[dr] & & *+[F-:<10pt>]{Y} \\
& *+[F-:<10pt>]{\bfM} \ar[ur] & \\
}}
$$
\vskip -0.1in
  \caption{Causal prediction task.}
  \label{fig:causal.pred.task}
\end{figure}
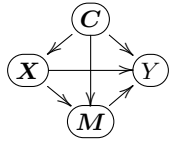
available for causal prediction tasks (i.e., prediction tasks where the inputs influence the outcome). In the next subsections, we present these results.

\subsection{Reparameterization in causal prediction tasks}

For the causal prediction task presented in Figure \ref{fig:causal.pred.task} we have that the joint distribution factorizes as,
\begin{equation*}
P(\bfC) P(\bfX \mid \bfC) P(\bfM \mid \bfC, \bfX) P(Y \mid \bfC, \bfM, \bfX)~,
\end{equation*}
where each component is described by the structural model,
\begin{align*}
\bfC &= \bfTheta_{CC} \, \bfC + \bfU_C~, \\
\bfX &= \bfTheta_{XX} \, \bfX + \bfTheta_{XC} \, \bfC + \bfU_X~, \\
\bfM &= \bfTheta_{MM} \, \bfM + \bfTheta_{MC} \, \bfC + \bfTheta_{MX} \, \bfX + \bfU_M~, \\
Y &= \bfTheta_{YC} \, \bfC + \bfTheta_{YM} \, \bfM + \bfTheta_{YX} \, \bfX + U_Y~,
\end{align*}
which can also be reparameterized as,
\begin{align*}
\bfC &= \bfW_C~, \\
\bfX &= \bfGamma_{XC} \, \bfC + \bfW_X~, \\
\bfM &= \bfGamma_{MC} \, \bfC + \bfGamma_{MX} \, \bfX + \bfW_M~, \\
Y &= \bfGamma_{YC} \, \bfC + \bfGamma_{YM} \, \bfM + \bfGamma_{YX} \, \bfX + W_Y~,
\end{align*}
where $\bfGamma_{MX} = (\bfI - \bfTheta_{MM})^{-1} \bfTheta_{MX}$, $\bfGamma_{YC} = \bfTheta_{YC}$, $\bfGamma_{YM} = \bfTheta_{YM}$, $\bfGamma_{YX} = \bfTheta_{YX}$, $W_Y = U_Y$, and the other parameters and error terms are given as before.

\section{Estimation of the causal effects in the causal task, and remarks on identification issues}

For the causal prediction task, we regress the response on the confounders, mediators, and features,
\begin{equation*}
Y = \sum_{k=1}^{n_C} \gamma_{_{{Y}{C_k}}} \, C_k + \sum_{k=1}^{n_M} \gamma_{_{{Y}{M_k}}} \, M_k + \sum_{k=1}^{n_X} \gamma_{_{{Y}{X_k}}} \, X_k + W_{Y}~,
\end{equation*}
and then generate the counterfactual response by adding back $\hat{W}_{Y}$ to a linear predictor containing only the causal effects of interest. In particular, we can generate counterfactual response data that captures the predictive performance due to direct causal effects, indirect causal effects, or to confounding, using, respectively,
\begin{align*}
\hat{Y}^\ast &= \hat{\bfGamma}_{YX} \bfX + \hat{W}_Y~, \\
\hat{Y}^\ast &= \hat{\bfGamma}_{YM} \, \hat{\bfM}^\ast + \hat{W}_Y~, \\
\hat{Y}^\ast &= \hat{\bfGamma}_{YC} \, \bfC + \hat{W}_Y~.
\end{align*}


Under the assumption that all the confounders and mediators are observed, we can identify the direct and indirect causal effects of the features on the response. In particular, a simple least squares estimation procedure provides consistent estimates of these causal effects\footnote{Here, we assume that the number of samples is larger than the number of covariates in the regression fits, and that multicolinearity is not an issue too.}. To see why, note that for the reparameterized model, if all confounders and mediators are observed, it follows from the Markov property of DAGs that $Y = f_{Y}(\bfC, \bfM, \bfX, W_{Y}) = f_{Y}(pa(Y), W_{Y})$. (Here, $f_Y$ represent a linear structural causal model). Hence, it follows that, when we regress $Y$ on the elements of $\bfC$, $\bfM$, and $\bfX$ only the coefficients associated with the parents of $Y$ will be statistically different from zero (for large enough sample sizes). Therefore, in practice, we don't need to know before hand which variables are the parents of $Y$. The parent set will be learned automatically from the data by the regression model fit.

Observe, as well, that even if the mediators are unobserved, but the confounders are still observed, we can still identify total causal effects. For instance, in causal tasks we have that,
\begin{align*}
Y &= \bfGamma_{YC} \, \bfC + \bfGamma_{YM} \, \bfM + \bfGamma_{YX} \, \bfX + W_Y~, \\
&= \bfGamma_{YC} \, \bfC + \bfGamma_{YM} \, (\bfGamma_{MC} \, \bfC + \bfGamma_{MX} \, \bfX + \bfW_M) + \bfGamma_{YX} \, \bfX + W_Y~, \\
&= \underbrace{(\bfGamma_{YC} + \bfGamma_{YM} \, \bfGamma_{MC})}_{\tilde{\bfGamma}_{YC}} \, \bfC + \underbrace{(\bfGamma_{YX} + \bfGamma_{YM} \, \bfGamma_{MX})}_{\tilde{\bfGamma}_{YX}} \, \bfX + \underbrace{\bfGamma_{YM} \, \bfW_M + W_Y}_{\tilde{W}_Y}~, \\
&= \tilde{\bfGamma}_{YC} \, \bfC + \tilde{\bfGamma}_{YX} \, \bfX + \tilde{W}_Y~,
\end{align*}
where $\tilde{\bfGamma}_{YX} = \bfGamma_{YX} + \bfGamma_{YM} \, \bfGamma_{MX}$ represents the total causal effect of $\bfX$ on $Y$, as represented in the DAG $M_b$ in in the causal task model (\ref{eq:causal.task.total}).
\begin{equation}
\xymatrix@-0.0pc{
(M_a) & *+[F-:<10pt>]{\bfC} \ar[dl]_{\bfGamma_{XC}} \ar[dr]^{\bfGamma_{YC}} \ar@/_1.5pc/[dd]|-{\bfGamma_{MC}} & & (M_b) & *+[F-:<10pt>]{\bfC} \ar[dl]_{\bfGamma_{XC}} \ar[dr]^{\tilde{\bfGamma}_{YC}} & \\
*+[F-:<10pt>]{\bfX} \ar@/_0.75pc/[rr]|-{\bfGamma_{YX}} \ar[dr]_{\bfGamma_{MX}} & & *+[F-:<10pt>]{Y}  & *+[F-:<10pt>]{\bfX} \ar[rr]_{\tilde{\bfGamma}_{YX}} && *+[F-:<10pt>]{Y} \\
& *+[F-:<10pt>]{\bfM} \ar[ur]_{\bfGamma_{YM}} & & & & \\
}
\label{eq:causal.task.total}
\end{equation}

On the other hand, if the mediators are observed, but some the confounders are unobserved, then neither the direct, the indirect, or the total causal effects are identifiable, and the predictions generated by the causality-aware approach will still be confounded. For instance, for the causal prediction tasks in model (\ref{eq:causal.task.biased}), we have that the unobserved confounders, $\bfU$, still confound the direct causal effect of $\bfX$ on $Y^\ast$ in model $M_d$, and the indirect causal effect in model $M_i$. As a consequence, the spurious associations contributed by $\bfU$ will still bias the predictions from models trained with the counterfactual data.
\begin{equation}
{\footnotesize
\xymatrix@-1.3pc{
(M_a) & *+[F-:<10pt>]{\bfC} \ar[dl] \ar[drrr] \ar[ddr] &  & \bfU \ar[dlll] \ar[dr] \ar[ddl] & & (M_d) & *+[F-:<10pt>]{\bfC} \ar[dl] \ar[ddr] &  & \bfU \ar[dlll] \ar[dr] \ar[ddl] & & (M_i) & *+[F-:<10pt>]{\bfC} \ar[dl] &  & \bfU \ar[dlll] \ar[dr] \ar[ddl] & \\
*+[F-:<10pt>]{\bfX} \ar[rrrr] \ar[drr] & & & & *+[F-:<10pt>]{Y} & *+[F-:<10pt>]{\bfX} \ar[rrrr] \ar[drr] & & & & *+[F-:<10pt>]{Y^\ast} & *+[F-:<10pt>]{\bfX} \ar[drr] & & & & *+[F-:<10pt>]{Y^\ast}  \\
& & *+[F-:<10pt>]{\bfM} \ar[urr] & & & & & *+[F-:<10pt>]{\bfM} & & & & & *+[F-:<10pt>]{\bfM^\ast} \ar[urr] & & \\
}}
\label{eq:causal.task.biased}
\end{equation}

Finally, observe that while so far we have discussed confounding of the feature/response relationship, it is also possible that the causal relations between features and mediators or between mediators and response are also influenced by confounders. If these confounders are unobserved, then we cannot identify the causal effects $\bfGamma_{MX}$ and $\bfGamma_{YM}$. Clearly, in the presence of unobserved confounding the causality-aware predictions will be biased, whenever the causal effects of interest are not identifiable.

\subsection{Causality-aware predictions in causal prediction tasks - the univariate case}

Consider a causal prediction task where the goal is to build a ML model whose predictive performance is only informed by the direct causal effect of $X$ on $Y$. We can simulate counterfactual response data, $Y^\ast$, according to the twin network in Figure \ref{fig:twins}a so that,
\begin{align}
Cov(X, Y^\ast) &= Cov(X, \, \theta_{YX} X + U_Y) = \theta_{YX} \, Var(X) = \theta_{YX}~, \label{eq:causal.direct.effect}
\end{align}

Now, consider a causal prediction task where the goal is to build a ML model whose predictive performance is only informed by the indirect causal effect of $X$ on $Y$. Now, we can simulate counterfactual response data, $Y^\ast$, according to the twin network in Figure \ref{fig:twins}b so that,
\begin{align}
Cov(X, Y^\ast) &= Cov(X, \, \theta_{YM} M^\ast + U_Y) = \theta_{YM} Cov(X, M^\ast) \nonumber \\
&= \theta_{YM} Cov(X, \theta_{MX} X + U_M) = \theta_{YM} \theta_{MX} Var(X) = \theta_{YM} \theta_{MX}~, \label{eq:causal.indirect.effect}
\end{align}

Finally, suppose that the goal is to build a ML model whose predictive performance is only informed by the spurious associations generated by the confounder. We can simulate data according to the twin network in Figure \ref{fig:twins}, so that,
\begin{align}
Cov(X, Y^\ast) &= Cov(X, \theta_{YC} C + U_Y) = \theta_{YC} Cov(X, C) \nonumber \\
&= \theta_{YC} Cov(\theta_{XC} C + U_X, C) = \theta_{YC} \theta_{XC} Var(C) = \theta_{YC} \theta_{XC}~. \label{eq:causal.confounding.effect}
\end{align}
\begin{figure}[!h]
{\scriptsize
$$
\xymatrix@-1.4pc{
(a) & & U_X \ar[dll] \ar[drr] & && (b) & & U_X \ar[dll] \ar[drr] & & & (c) & & U_X \ar[dll] \ar[drr] & &&\\
*+[F-:<10pt>]{X} \ar[ddd] \ar[ddr] & & U_C \ar[dl] \ar[dr] & & *+[F-:<10pt>]{X} \ar[ddd] \ar[ddl] & *+[F-:<10pt>]{X} \ar[ddd] \ar[ddr] & & U_C \ar[dl] \ar[dr] & & *+[F-:<10pt>]{X} \ar[ddl] & *+[F-:<10pt>]{X} \ar[ddd] \ar[ddr] & & U_C \ar[dl] \ar[dr] & & *+[F-:<10pt>]{X}  \\
& *+[F-:<10pt>]{C} \ar[ul] \ar[ddl] \ar[d] & U_M \ar[dl] \ar[dr] & *+[F-:<10pt>]{C} \ar[d] \ar[ur] && & *+[F-:<10pt>]{C} \ar[ul] \ar[ddl] \ar[d] & U_M \ar[dl] \ar[dr] & *+[F-:<10pt>]{C} \ar[ru] & & & *+[F-:<10pt>]{C} \ar[ul] \ar[ddl] \ar[d] & U_M \ar[dl] \ar[dr] & *+[F-:<10pt>]{C} \ar[rdd] \ar[d] \ar[ur] &&  \\
& *+[F-:<10pt>]{M} \ar[dl] && *+[F-:<10pt>]{M} &&& *+[F-:<10pt>]{M} \ar[dl] && *+[F-:<10pt>]{M^\ast} \ar[dr] & & & *+[F-:<10pt>]{M} \ar[dl] && *+[F-:<10pt>]{M} && \\
*+[F-:<10pt>]{Y} & & U_Y \ar[ll] \ar[rr] & & *+[F-:<10pt>]{Y^\ast} & *+[F-:<10pt>]{Y} & & U_Y \ar[ll] \ar[rr] & & *+[F-:<10pt>]{Y^\ast} & *+[F-:<10pt>]{Y} & & U_Y \ar[ll] \ar[rr] & & *+[F-:<10pt>]{Y^\ast} \\
}
$$}
\vskip -0.1in
  \caption{Twin network approach for the causal prediction tasks.}
  \label{fig:twins}
\end{figure}
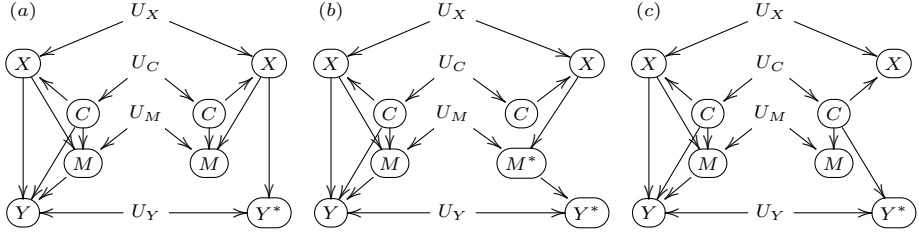

Similarly to the anticausal prediction task case, alternative interventions based on SWIGs can also be used. Figure \ref{fig:causal.node.splitting} shows the respective SWIGs for the generation of counterfactual responses.
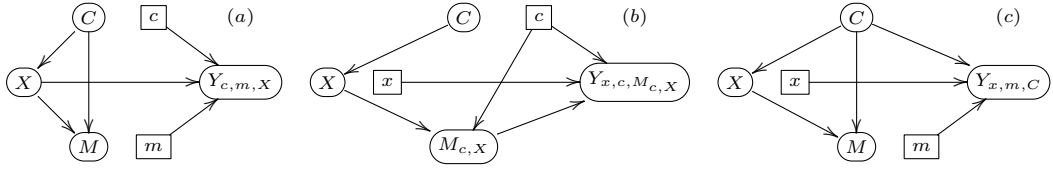
\begin{figure}[!h]
{\scriptsize
$$
\xymatrix@-1.1pc{
& *+[F-:<10pt>]{C} \ar[dd] \ar[dl] & *+[F]{c} \ar[dr] & (a) & & & *+[F-:<10pt>]{C} \ar[dll] & *+[F]{c} \ar[dr] \ar[ddl] & (b) & & & *+[F-:<10pt>]{C} \ar[dll] \ar[drr] \ar[dd] & & (c) \\
*+[F-:<10pt>]{X} \ar[dr] \ar[rrr] & & & *+[F-:<10pt>]{Y_{c,m,X}} & *+[F-:<10pt>]{X} \ar[drr] & *+[F]{x} \ar[rrr] & & & *+[F-:<10pt>]{Y_{x,c,{M_{c,X}}}} & *+[F-:<10pt>]{X} \ar[drr] & *+[F]{x} \ar[rrr] & & & *+[F-:<10pt>]{Y_{x,m,C}} \\
& *+[F-:<10pt>]{M} & *+[F]{m} \ar[ur] & & & & *+[F-:<10pt>]{M_{c,X}} \ar[urr] & & & & & *+[F-:<10pt>]{M} & *+[F]{m} \ar[ur] & & \\
}
$$}
\vskip -0.1in
  \caption{SWIGs for the causal predictive tasks.}
  \label{fig:causal.node.splitting}
\end{figure}

Direct calculation of the covariances shows that, $Cov(X, Y_{c,m,X}) = \theta_{YX}$ for the SWIG in panel a, $Cov(X, Y_{x,c,{M_{c,X}}}) = \theta_{YM} \, \theta_{MX}$ for the SWIG in panel b, and $Cov(X, Y_{x,m,C}) = \theta_{XC} \, \theta_{YC}$ for the SWIG in panel c.

Observe, that alternative interventions where we intervene on the features will not recover the correct associations. To illustrate this point, consider the simplified situation where we are interested in the direct causal effect, $\theta_{YX}$, in a model containing a confounder but no mediator. For the interventions presented in Figure \ref{fig:confoundingtwin2}a we have that,
\begin{align*}
Cov(X^\ast, Y^\ast) &= Cov(X^\ast, \, \theta_{YX} X^\ast + \theta_{YC} C + U_Y) \\
&= \theta_{YX} \, Var(X^\ast) + \theta_{YC} Cov(X^\ast, C) \\
&= \theta_{YX} \, Var(U_X) + \theta_{YC} Cov(U_X, C) \\
&= \theta_{YX} \, Var(U_X) \\
&= \theta_{YX} \, (1 - \theta_{XC}^2)~, \\
\end{align*}
where the last equality follows from the fact that $Var(U_X) = (1 - \theta_{XC}^2)$ since $1 = Var(X) = Var(\theta_{XC} C + U_X) = \theta_{XC}^2 Var(C) + Var(U_X) = \theta_{XC}^2 + Var(U_X)$. Similarly, even for the intervention in Figure \ref{fig:confoundingtwin2}b we still have that,
\begin{align*}
Cov(X^\ast, Y^\ast) &= Cov(X^\ast, \, \theta_{YX} X^\ast + U_Y) \\
&= \theta_{YX} \, Var(X^\ast) = \theta_{YX} \, Var(U_X) = \theta_{YX} \, (1 - \theta_{XC}^2)~. \\
\end{align*}

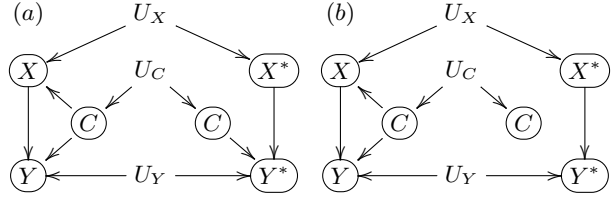
\begin{figure}[!h]
{\footnotesize
$$
\xymatrix@-1.4pc{
(a) & & U_X \ar[dll] \ar[drr] & && (b) & & U_X \ar[dll] \ar[drr] & &\\
*+[F-:<10pt>]{X} \ar[dd] & & U_C \ar[dl] \ar[dr] & & *+[F-:<10pt>]{X^\ast} \ar[dd] & *+[F-:<10pt>]{X} \ar[dd] & & U_C \ar[dl] \ar[dr] & & *+[F-:<10pt>]{X^\ast} \ar[dd] \\
& *+[F-:<10pt>]{C} \ar[ul] \ar[dl] &  & *+[F-:<10pt>]{C} \ar[rd] && & *+[F-:<10pt>]{C} \ar[ul] \ar[dl] & & *+[F-:<10pt>]{C} & \\
*+[F-:<10pt>]{Y} & & U_Y \ar[ll] \ar[rr] & & *+[F-:<10pt>]{Y^\ast} & *+[F-:<10pt>]{Y} & & U_Y \ar[ll] \ar[rr] & & *+[F-:<10pt>]{Y^\ast} \\
}
$$}
  \caption{Alternative model modifications for the confounding only examples.}
  \label{fig:confoundingtwin2}
\end{figure}

These examples illustrate that for causal prediction tasks, only interventions that do not modify $X$ can generate associations that recover the causal effects of interest.

\noindent \textbf{Remarks:} The fact that the causality-aware approach requires the computation of counterfactual responses, $Y^\ast$, implies that, contrary to anticausal prediction tasks (which requires the computation of counterfactual features, $X^\ast$, and where it is possible to estimate counterfactual features for both the training and test sets without having access to the test set responses), causal prediction tasks require access to the test set responses, $Y_{ts}$, in order to estimate the causal effects and residuals needed for the computation of the counterfactual test set responses, $Y^\ast_{ts}$. Since, in practice, $Y_{ts}$ is unavailable (as it is the quantity we want to predict) it follows that the approach cannot be used to generate, for example, stable predictions w.r.t. unknown shifts in target populations, as was done in the anticausal tasks. In causal prediction tasks, and under the assumption of no dataset shifts between the training and target populations, the causality-aware approach can still be used to estimate the predictive performance that is due to (or is free from) the influence of sensitive variables. For instance, we still can split our development data into independent and identically distributed training and validation sets and then compute counterfactual versions of the training and validation responses, in order to generate causality-aware predictions that can still be used to answer important questions such as, for example: ``what would the predictive performance of the learner be, had the (in)direct path not contributed to the association between the features and the response?" or ``what would the predictive performance of the learner be, had the observed confounders not biased the data?"

\subsection{Causality-aware predictions in causal prediction tasks - the multivariate case}

\begin{theorem}
Consider a causal prediction task:

(i) Suppose the interest focus on the causal effects generated by the paths in the path set $\bfX \rightarrow Y$. If $Y^\ast$ is given by $Y^\ast = \bfGamma_{YX} \, \bfX + W_Y$, then $Cov(Y^\ast, \bfX) = \bfGamma_{YX} \, Cov(\bfX)$.

(ii) Suppose the interest focus on the causal effects generated by the paths in the path set $\bfX \rightarrow \bfM \rightarrow Y$. If  $Y^\ast$ is given by $Y^\ast = \bfGamma_{YM} \, \bfM^\ast + W_Y$, and $\bfM^\ast = \bfGamma_{MX} \bfX + \bfW_M$, then $Cov(Y^\ast, \bfX) = \bfGamma_{YM} \, \bfGamma_{MX} \, Cov(\bfX)$.

(iii) Suppose the interest focus on the spurious associations generated by the paths in the path set $\bfX \leftarrow \bfC \rightarrow Y$. If  $Y^\ast$ is given by $Y^\ast = \bfGamma_{YC} \, \bfC + W_Y$, then $Cov(Y^\ast, \bfX) = \bfGamma_{YC} \, Cov(\bfC) \, \bfGamma_{XC}^T$.
\end{theorem}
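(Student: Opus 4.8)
The plan is to mirror the proof of Theorem 1: reduce each of the three identities to repeated application of the three properties of the cross-covariance operator recorded there (bilinearity, the rule $Cov(\bfA\,\bfZ_1, \bfB\,\bfZ_2) = \bfA\, Cov(\bfZ_1, \bfZ_2)\, \bfB^T$ for constant matrices, and $Cov(\bfZ, \bfZ) = Cov(\bfZ)$), combined with the reparameterized structural equations for the causal task derived above, namely $\bfX = \bfGamma_{XC}\,\bfC + \bfW_X$, $\bfM = \bfGamma_{MC}\,\bfC + \bfGamma_{MX}\,\bfX + \bfW_M$, and $Y = \bfGamma_{YC}\,\bfC + \bfGamma_{YM}\,\bfM + \bfGamma_{YX}\,\bfX + W_Y$.

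The only ingredient that is not purely formal is the vanishing of the ``nuisance'' cross-covariances $Cov(W_Y, \bfX)$, $Cov(\bfW_M, \bfX)$, and $Cov(\bfC, \bfW_X)$. I would establish these at the outset: since $\bfX = \bfGamma_{XC}\,\bfW_C + \bfW_X$ is a linear function of $\bfU_C$ and $\bfU_X$ only, and the original error vectors $\bfU_C, \bfU_X, \bfU_M, U_Y$ are mutually independent with zero mean, it follows that $W_Y = U_Y$ and $\bfW_M = (\bfI - \bfTheta_{MM})^{-1}\bfU_M$ are each uncorrelated with $\bfX$, and that $\bfW_X$ is uncorrelated with $\bfC = \bfW_C$. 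In particular $Cov(\bfC, \bfX) = Cov(\bfC, \bfGamma_{XC}\,\bfC + \bfW_X) = Cov(\bfC)\,\bfGamma_{XC}^T$, which is the identity needed in part (iii).

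With these preliminaries each result is a two-line computation. For (i), $Cov(Y^\ast, \bfX) = Cov(\bfGamma_{YX}\,\bfX + W_Y, \bfX) = \bfGamma_{YX}\, Cov(\bfX) + Cov(W_Y, \bfX) = \bfGamma_{YX}\, Cov(\bfX)$. For (ii), $Cov(Y^\ast, \bfX) = \bfGamma_{YM}\, Cov(\bfM^\ast, \bfX)$ with $Cov(\bfM^\ast, \bfX) = Cov(\bfGamma_{MX}\,\bfX + \bfW_M, \bfX) = \bfGamma_{MX}\, Cov(\bfX)$, giving $\bfGamma_{YM}\,\bfGamma_{MX}\, Cov(\bfX)$. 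For (iii), $Cov(Y^\ast, \bfX) = \bfGamma_{YC}\, Cov(\bfC, \bfX) = \bfGamma_{YC}\, Cov(\bfC)\,\bfGamma_{XC}^T$ by the identity above.

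I do not anticipate a genuine obstacle. The one point to be careful about is matrix bookkeeping: the argument order in $Cov(\cdot, \cdot)$ matters, so the transposes must be tracked, and it is worth stating explicitly why a factor of $Cov(\bfX)$ now appears relative to the anticausal analogue of part (i) --- unlike the scalar standardized response with $Var(Y) = 1$, the feature covariance matrix $Cov(\bfX)$ is not the identity even after standardizing each individual feature to unit variance.
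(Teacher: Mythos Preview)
Your proposal is correct and follows essentially the same route as the paper's proof: both reduce each identity to bilinearity and the rule $Cov(\bfA\,\bfZ_1,\bfB\,\bfZ_2)=\bfA\,Cov(\bfZ_1,\bfZ_2)\,\bfB^T$, expanding $Cov(\bfC,\bfX)$ via $\bfX=\bfGamma_{XC}\bfC+\bfW_X$ in part (iii). If anything, your write-up is slightly more careful, since you explicitly justify the vanishing of $Cov(W_Y,\bfX)$, $Cov(\bfW_M,\bfX)$, and $Cov(\bfC,\bfW_X)$ from the independence of the original error vectors, whereas the paper simply drops these terms without comment.
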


\begin{proof}
$ $

Result \textit{i}: If $Y^\ast = \bfGamma_{YX} \, \bfX + W_Y$, then,
\begin{align*}
Cov(Y^\ast, \bfX) &= Cov(\bfGamma_{YX} \, \bfX + W_Y, \bfX) \\
&= \bfGamma_{YX} \, Cov(\bfX, \bfX) \\
&= \bfGamma_{YX} \, Cov(\bfX)
\end{align*}

Result \textit{ii}: If $Y^\ast = \bfGamma_{YM} \, \bfM^\ast + W_Y$ and $\bfM^\ast = \bfGamma_{MX} \bfX + \bfW_M$, then,
\begin{align*}
Cov(Y^\ast, \bfX) &= Cov(\bfGamma_{YM} \, \bfM^\ast + W_Y, \bfX) \\
&= \bfGamma_{YM} \, Cov(\bfX, \bfM^\ast)  \\
&= \bfGamma_{YM} \, Cov(\bfGamma_{MX} \bfX + \bfW_M, \bfX) \\
&= \bfGamma_{YM} \, \bfGamma_{MX} \, Cov(\bfX, \bfX) \\
&= \bfGamma_{YM} \, \bfGamma_{MX} \, Cov(\bfX)
\end{align*}

Result \textit{iii}: If $Y^\ast = \bfGamma_{YC} \, \bfC + W_Y$, then,
\begin{align*}
Cov(Y^\ast, \bfX) &= Cov(\bfGamma_{YC} \, \bfC + W_Y, \bfX) \\
&= \bfGamma_{YC} \, Cov(\bfC, \bfX)  \\
&= \bfGamma_{YC} \, Cov(\bfC, \bfGamma_{XC} \, \bfC + \bfW_X) \\
&= \bfGamma_{YC} \, Cov(\bfC, \bfC) \, \bfGamma_{XC}^T \\
&= \bfGamma_{YC} \, Cov(\bfC) \, \bfGamma_{XC}^T
\end{align*}
\end{proof}

Note that, in the univariate case, results (i), (ii), and (iii) in Theorem 2 reduce to the univariate results presented in equations (\ref{eq:causal.direct.effect}), (\ref{eq:causal.indirect.effect}), and (\ref{eq:causal.confounding.effect}), respectively (note that $Cov(\bfX)$ reduces to 1). Observe, as well, that results (\textit{i}) and (\textit{ii}) in Theorem 2 show that, in addition to the direct causal effect ($\bfGamma_{YX}$, in result \textit{i}) and the indirect causal effect ($\bfGamma_{YM} \, \bfGamma_{MX}$, in result \textit{ii}) the marginal covariances between the elements of $\bfX$ and $Y^\ast$ also depend on $Cov(\bfX)$. This makes sense, since $Cov(\bfX)$ captures the associations between the elements of $\bfX$. Note that for each element $X_j$ of $\bfX$, the operation $\bfGamma_{YX} \, Cov(\bfX)$ captures not only the association generated by the direct causal path $X_j \rightarrow Y^\ast$, but also the association generated by indirect and backdoor paths that start at $X_j$ and end at $Y^\ast$, but where the last node prior to $Y^\ast$ is another element $X_k$ of $\bfX$.




As an illustration, consider the DAG describing the causal prediction task in Figure \ref{fig:causal.example}a, where $Cov(\bfX)$,
\begin{equation*}
\begin{pmatrix}
1 & \theta_{{X_2}{X_1}} + \theta_{{X_1}{C_1}} \, \theta_{{X_2}{C_1}} \\
\theta_{{X_2}{X_1}} + \theta_{{X_1}{C_1}} \, \theta_{{X_2}{C_1}} & 1 \\
\end{pmatrix}~.
\end{equation*}
In this example, the association between $X_1$ and $X_2$,
\begin{equation*}
Cov(X_1, X_2) = \underbrace{\theta_{{X_2}{X_1}}}_{X_1 \rightarrow X_2} + \underbrace{\theta_{{X_1}{C_1}} \, \theta_{{X_2}{C}}}_{X_1 \leftarrow C_1 \rightarrow X_2}~,
\end{equation*}
is generated by the paths $X_1 \rightarrow X_2$ and $X_1 \leftarrow C_1 \rightarrow X_2$.
\begin{figure}[!h]
{\scriptsize
$$
\xymatrix@-1.5pc{
(a) & *+[F-:<10pt>]{C_1} \ar[dl] \ar[dddl] \ar[r] & *+[F-:<10pt>]{C_2} \ar[dddd] \ar[ddr] && (b) & *+[F-:<10pt>]{C_1} \ar[dl] \ar[dddl] \ar[r] & *+[F-:<10pt>]{C_2} \ar[dddd] & \\
*+[F-:<10pt>]{X_1} \ar[dd] \ar[drrr] & & & & *+[F-:<10pt>]{X_1} \ar[dd] \ar[drrr] \\
&&& *+[F-:<10pt>]{Y} &  &&& *+[F-:<10pt>]{Y^\ast} \\
*+[F-:<10pt>]{X_2} \ar[dr] \ar[urrr] & & & & *+[F-:<10pt>]{X_2} \ar[dr] \ar[urrr] \\
& *+[F-:<10pt>]{M_1} \ar[r] & *+[F-:<10pt>]{M_2} \ar[uur] & & & *+[F-:<10pt>]{M_1} \ar[r] & *+[F-:<10pt>]{M_2} \\
}
$$}
\vskip -0.1in
  \caption{A causal prediction task illustrative example.}
  \label{fig:causal.example}
\end{figure}
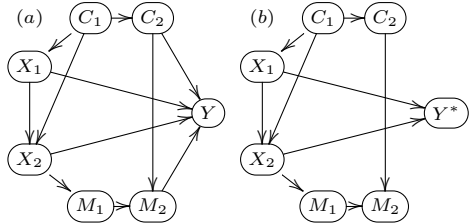
From result \textit{i} in Theorem 2 we have that,
\begin{align*}
Cov&(Y^\ast, \bfX) = \bfGamma_{YX} \, Cov(\bfX) = (\theta_{Y{X_1}}, \theta_{Y{X_2}}) \, Cov(\bfX) \\
&=
\begin{pmatrix}
\theta_{{Y}{X_1}} + \theta_{{X_2}{X_1}} \, \theta_{{Y}{X_2}} + \theta_{{X_1}{C_1}} \, \theta_{{X_2}{C_1}} \, \theta_{{Y}{X_2}} \\
\theta_{{Y}{X_2}} + \theta_{{X_2}{X_1}} \, \theta_{{Y}{X_1}} + \theta_{{X_2}{C_1}} \, \theta_{{X_1}{C_1}} \, \theta_{{Y}{X_1}}\\
\end{pmatrix}^T~, \\
&=
\begin{pmatrix}
Cov(Y^\ast, X_1) \\
Cov(Y^\ast, X_2) \\
\end{pmatrix}^T~.
\end{align*}
Note that the direct application of Wright's path analysis to the diagram in Figure \ref{fig:causal.example}b shows that we can decompose the covariance of $X_1$ and $Y^\ast$,
\begin{align*}
Cov(Y^\ast, X_1) = \underbrace{\theta_{{Y}{X_1}}}_{X_1 \rightarrow Y^\ast} + \underbrace{\theta_{{X_2}{X_1}} \, \theta_{{Y}{X_2}}}_{X_1 \rightarrow X_2 \rightarrow Y^\ast} + \underbrace{\theta_{{X_1}{C_1}} \, \theta_{{X_2}{C_1}} \, \theta_{{Y}{X_2}}}_{X_1 \leftarrow C_1 \rightarrow X_2 \rightarrow Y^\ast}~,
\end{align*}
in terms of the direct path $X_1 \rightarrow Y^\ast$, the indirect path $X_1 \rightarrow X_2 \rightarrow Y^\ast$, and the backdoor path $X_1 \leftarrow C_1 \rightarrow X_2 \rightarrow Y^\ast$. Similarly, the covariance of $X_2$ and $Y^\ast$,
\begin{align*}
Cov(Y^\ast, X_2) = \underbrace{\theta_{{Y}{X_2}}}_{X_2 \rightarrow Y^\ast} + \underbrace{\theta_{{X_2}{X_1}} \, \theta_{{Y}{X_1}}}_{X_2 \leftarrow X_1 \rightarrow Y^\ast} + \underbrace{\theta_{{X_2}{C_1}} \, \theta_{{X_1}{C_1}} \, \theta_{{Y}{X_1}}}_{X_2 \leftarrow C_1 \rightarrow X_1 \rightarrow Y^\ast}~,
\end{align*}
can be decomposed in terms of the direct path $X_2 \rightarrow Y^\ast$, and the backdoor paths $X_2 \leftarrow X_1 \rightarrow Y^\ast$ and $X_2 \leftarrow C_1 \rightarrow X_1 \rightarrow Y^\ast$. (Note that all the indirect and backdoor paths in this example either start at $X_1$ and end at $X_2$ before connecting to $Y^\ast$, or start at $X_2$ and end at $X_1$ before connecting to $Y^\ast$.)

\end{document}